\documentclass[11pt]{article}
\usepackage{fullpage}
\usepackage{authblk}
\usepackage{amsthm}
\usepackage{amsmath}
\usepackage[table,xcdraw]{xcolor}
\usepackage{multirow}
\usepackage{wrapfig}
\usepackage[leftcaption]{sidecap}
\usepackage{multicol}

\usepackage{thm-restate}
\usepackage{nicefrac}
\usepackage[boxed,lined,linesnumbered, ruled]{algorithm2e}

\usepackage{comment,amsfonts,amsmath,amsthm,graphicx,subcaption}

\usepackage{natbib}

\definecolor{Darkblue}{rgb}{0,0,0.4}
\definecolor{Brown}{cmyk}{0,0.61,1.,0.60}
\definecolor{Purple}{cmyk}{0.45,0.86,0,0}
\definecolor{Darkgreen}{rgb}{0.133,0.543,0.133}

\usepackage[colorlinks,linkcolor=Darkblue,filecolor=blue,citecolor=blue,urlcolor=Darkblue,pagebackref]{hyperref}
\usepackage[nameinlink]{cleveref}

\newcommand{\commentout}[1]{}

\def\epsilon{\varepsilon}
\def\eps{\varepsilon}

\newcommand{\alert}[1]{\textbf{\color{red}
		[[[#1]]]}\marginpar{\textbf{\color{red}**}}\typeout{ALERT:\@
		\the\inputlineno: #1}}

\usepackage[colorinlistoftodos,prependcaption,textsize=tiny]{todonotes}
\newcommand{\atodo}[1]{}
\newcommand{\atodoin}[1]{}

\newcommand{\gtodo}[1]{}
\newcommand{\gtodoin}[1]{}

\def\eps{\epsilon}

\newcommand{\cost}{{\rm cost}}

\newcommand{\opt}{{\rm OPT}}

\newcommand{\poly}{{\rm poly}}

\newcommand{\mommit}[1]{}
\newcommand{\namedref}[2]{\hyperref[#2]{#1~\ref*{#2}}}

\newcommand{\ball}{{\sf ball}}

\newcommand{\SC}{\texttt{SETTLE-CLUSTER}\xspace}

\newcommand{\guess}{\emph{\texttt{guess}}\xspace}

\theoremstyle{plain}
\newtheorem{theorem}{Theorem}[]
\newtheorem{lemma}{Lemma}[]

\newtheorem{remark}[lemma]{Remark}
\newtheorem{definition}{Definition}

\newtheorem{assumption}{Assumption}
\newtheorem{observation}{Observation}

\newcommand{\cB}{\mathcal{B}}

\newcommand{\cD}{\mathcal{D}}
\newcommand{\cE}{\mathcal{E}}
\newcommand{\cF}{\mathcal{F}}

\newcommand{\cI}{\mathcal{I}}
\newcommand{\cJ}{\mathcal{J}}

\newcommand{\cL}{\mathcal{L}}

\newcommand{\cS}{\mathcal{S}}

\newcommand{\cX}{\mathcal{X}}

\newcommand{\kcen}{\textsc{$k$-Center}\xspace}

\newcommand{\pp}{\textsf{P}\xspace}
\newcommand{\np}{\textsf{NP}\xspace}
\newcommand{\fpt}{\textsf{FPT}}

\newcommand{\nonblind}[1]{}

\newcommand{\fairsorout}{Fair SoR with Outliers\xspace}
\newcommand{\colkceno}{Colorful SoR with Outliers\xspace}
\newcommand{\fairsoroutsc}{fair SoR with outliers\xspace}
\newcommand{\colkcenosc}{colorful SoR with outliers\xspace}
\newcommand{\fairnoroutsc}{fair norm of radii with outliers\xspace}

\newcommand{\fairksupo}{Fair Sup-SoR with Outliers\xspace}
\newcommand{\fairksuposc}{fair sup-SoR with outliers\xspace}

\newcommand{\colksupo}{Colorful SoR with Outliers\xspace}
\newcommand{\colksuposc}{colorful SoR with outliers\xspace}
\newcommand{\met}{\ensuremath{(X,d)}\xspace}

\newcommand{\mets}{\ensuremath{(F\cup C,d')}\xspace}
\newcommand{\points}{\ensuremath{X}}
\newcommand{\outcnt}{\ensuremath{z}\xspace}
\newcommand{\groups}{\ensuremath{\mathcal{X}}}

\newcommand{\foins}{\ensuremath{(\met,k,z,\groups=\{X_i\}_{i \in [t]},\{k_i\}_{i \in [t]})}}
\newcommand{\fosupins}{\ensuremath{(\mets,k,z,\cF=\{F_i\}_{F_i \in [\ell]}})}
\newcommand{\coins}{\ensuremath{(\mets,k,z,\cF=\{F_i\}_{i \in [k]})}}
\newcommand{\coinsp}{\ensuremath{(\mets,k,z,\cF'=\{F'_i\}_{i \in [k]})}}

\title{FPT Approximations for Fair Sum of Radii with Outliers and General Norm Objectives}

\author{Ameet Gadekar}

\affil{CISPA Helmholtz Center for Information Security, Saarbr\"{u}cken, Germany.\qquad\qquad\qquad Email:  \texttt{ameet.gadekar@cispa.de}}
 \date{}
\begin{document}
\pagenumbering{gobble}

\maketitle



\begin{abstract}

The \emph{sum of radii} problem is a classical clustering problem in which, given a set $X$ of points and an integer $k$, the goal is to place $k$ balls that cover $X$ while minimizing the sum of their radii.
Recent work has focused on incorporating modern constraints such as fairness and robustness, motivated by biased and noisy data.
We study the \emph{fair sum of radii with outliers} problem, where the chosen centers must satisfy group-based representation constraints while allowing up to $z$ points to be excluded.

We present a $(3+\eps)$-approximation algorithm that runs in fixed-parameter tractable time parameterized by $k$.
Our framework extends to the more general setting where the objective is a monotone symmetric norm of the radii, achieving a $(3+\eps)$-approximation for any fixed norm; this guarantee is tight under Gap-ETH.
Moreover, the algorithm is \emph{oblivious} to the choice of norm: it outputs a small list of candidate solutions such that, for every monotone symmetric norm $f$, the list contains a $(3+\eps)$-approximate solution under $f$.

Our approach is based on a novel iterative ball-finding framework that uncovers a structural trichotomy in the optimal clustering, enabling us to directly construct fair solutions while handling outliers.
Finally, we extend our techniques to the more general \emph{fair-range} setting, where each group is subject to both lower and upper bounds.
\end{abstract}

    	\vfill
	\begin{multicols}{2}
		{
			\setcounter{secnumdepth}{5}
			\setcounter{tocdepth}{2} \tableofcontents
		}
	\end{multicols}
	\newpage

\pagenumbering{arabic}

\section{Introduction}

Clustering is a fundamental primitive for analyzing data, with applications in optimization, operations research, computational geometry, and machine learning. Classical objectives such as $k$-median, $k$-means, and $k$-center have been extensively studied for more than four decades. Among these, $k$-center is perhaps the simplest and most well-understood problem~\cite{gonzalez1985clustering,hochbaum1985best,DBLP:journals/heuristics/Garcia-DiazSMM17}. Given a metric space $(X,d)$ and an integer $k$, the goal is to place $k$ balls to cover all points while minimizing the maximum radius. The problem admits a simple greedy $2$-approximation algorithm, which is optimal assuming $\pp \neq \np$.

A closely related objective is the \emph{sum of radii} problem~\cite{MS89,CHARIKAR2004417}. In this problem, we are given a metric space $(X,d)$ and an integer $k$, and the goal is to select centers $S=\{s_1,\dots,s_k\} \subseteq X$ and radii $R=\{r_1,\dots,r_k\}$ such that the union of the balls $\bigcup_{i \in [k]} \ball(s_i,r_i)$ covers all points in $X$, while minimizing $\sum_{i \in [k]} r_i$. Compared to $k$-center, this objective better preserves the structure of the underlying clusters by penalizing uneven cluster sizes.

From a computational perspective, the sum of radii problem behaves very differently from \kcen. On the one hand, it admits a QPTAS~\cite{10.5555/1347082.1347172}, suggesting the possibility of a PTAS, in contrast to \kcen, which is APX-hard even in Euclidean space~\cite{awasthi2015hardness}. On the other hand, even when the centers are fixed, computing the optimal assignment of points (i.e., determining the optimal radii) is \np-hard~\cite{DBLP:journals/corr/DHKS23}, unlike \kcen where assignments are trivial. 
This contrast highlights the richer structure and additional algorithmic challenges inherent in the sum of radii objective.

\paragraph*{Fairness and Robustness.}
Two central directions in modern clustering are robustness to outliers and fairness constraints on the solution.
Classical objectives such as \kcen and sum of radii are highly sensitive to perturbations: even a small number of adversarial or extreme points can significantly distort the solution.
The robust variant addresses this by allowing the algorithm to discard up to $z$ points (called outliers), thereby stabilizing the objective~\cite{CHARIKAR2004417,DBLP:journals/talg/ChakrabartyN19,BuchemERW24}. 
Independently, in \emph{fair clustering}, points are associated with sensitive attributes (such as gender, expertise, or geographic region), and the goal is to ensure that the clustering solution satisfies prescribed fairness constraints across these groups, such as balanced or proportional representation.

This direction has received significant attention over the past decade~\cite{DBLP:conf/nips/Chierichetti0LV17,DBLP:conf/icml/JonesNN20,DBLP:conf/isaac/CartaDHR024,DBLP:conf/approx/BandyapadhyayC25,DBLP:conf/www/GadekarGT25,DBLP:conf/icalp/Rosner018, DBLP:conf/icml/KleindessnerAM19, DBLP:conf/kdd/ThejaswiGOO22, 10.1145/3292500.3330987, DBLP:conf/icalp/Micha020, DBLP:journals/corr/abs-2002-07892, DBLP:conf/approx/Bercea0KKRS019,DBLP:conf/nips/BeraCFN19}. 
While both robustness and fairness have been extensively studied in isolation, their combination introduces substantial new challenges, and techniques for either setting do not directly extend. 
These directions are further motivated by the presence of noise and bias in modern datasets, where both robustness and fairness considerations naturally arise.

\paragraph*{Our Problem.}
Motivated by these challenges, we study the \emph{fair sum of radii with outliers} problem, where both fairness and robustness constraints are present simultaneously. 
The input consists of a metric space $(X,d)$ on $n$ points, an integer $k$, an outlier budget $z$, and a partition of $X$ into groups $X_1,\dots,X_t$ with upper bounds $k_i$ on the number of centers chosen from each group. 
The goal is to select $k$ centers and radii such that:
(i) at least $n-z$ points are covered,
(ii) $|S \cap X_i| \le k_i$ for all $i$, and
(iii) the sum of radii is minimized.
Thus, fairness is enforced through constraints on the selection of centers across groups, while robustness is handled by allowing up to $z$ uncovered points (outliers).
A similar combination of fairness and outliers has been studied for the $k$-center objective~\cite{DBLP:journals/talg/HarrisPST19,DBLP:conf/sosa/AneggKZ23}, including more general formulations such as robust matroid $k$-center.
In contrast, for the sum of radii objective, this combination has not been studied to the best of our knowledge.

The sum of radii problem has been studied separately under robustness and fairness constraints. 
For the outlier version, the best known polynomial-time approximation is $(3+\eps)$ due to~\cite{BuchemERW24}, while for the fairness notion considered in this paper (i.e., constraints on center selection), the best known factor is also $(3+\eps)$~\cite{ChenXXZ24}, albeit in \fpt\ time parameterized by $k$. 
When the underlying metric has doubling dimension $d$, $(1+\eps)$-approximations are known for both variants separately~\cite{DBLP:conf/aaai/BanerjeeBGH25}. 
These results align with a broader line of work on \fpt-approximation algorithms for clustering under fairness and outliers~\cite{DBLP:conf/isaac/CartaDHR024,DBLP:conf/approx/BandyapadhyayC25,ChenXXZ24,DBLP:journals/corr/LGXZ25}.

\paragraph*{Beyond Sum: Norm Objective.}
While our primary focus is on the sum of radii objective, our framework naturally extends to monotone symmetric norms over the radii.
Recent work has focused on designing approximation algorithms for such general objectives, rather than tailoring algorithms to specific ones such as sum or max~\cite{chakrabarty2019approximation, DBLP:conf/soda/ChlamtacMV22, AbbasiClustering23, DBLP:conf/soda/HeroldKS25,DBLP:conf/soda/HeroldKS26}.
These objectives provide a unified framework that captures and interpolates between several classical clustering objectives. 
In particular, monotone symmetric norms over the radii captures $k$-center ($\ell_\infty$), sum of radii ($\ell_1$), and sum of squared radii ($\ell_2$)~\cite{alt2006minimum, liu2022primal}.
Motivated by this direction, we study fair sum of radii with outliers under such norm objectives.

Our main result is a $(3+\eps)$-approximation algorithm in \fpt\ time for fair sum of radii with outliers. Moreover, our framework extends to this more general norm setting, yielding a $(3+\eps)$-approximation for any fixed monotone symmetric norm.

\subsection{Our Results}

We begin with our result for the sum of radii objective, which forms the conceptual core of our approach.

\begin{theorem}\label{thm:mainthem}
	There is a deterministic $(3+\eps)$-approximation algorithm for fair sum of radii with outliers running in time $2^{O(k \log (k/\eps))}\cdot \mathrm{poly}(n)$. 
\end{theorem}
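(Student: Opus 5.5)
We prove \Cref{thm:mainthem} by guessing the structure of an optimal solution and then handling the fairness constraints with a matching step. Fix an optimal solution with centers $o_1,\dots,o_k$, radii $r_1^*\ge \dots\ge r_k^*$, clusters $O_1,\dots,O_k$ (the points each ball covers), and at most $z$ uncovered outliers. Let $\opt=\sum_i r_i^*$.

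\textbf{Guessing radii.} First we guess every radius up to a $(1+\eps)$ factor. We guess $r_1^*$ exactly among the $O(n^2)$ pairwise distances. Radii smaller than $\eps r_1^*/k$ are rounded up to $\eps r_1^*/k$, which adds at most $\eps\,\opt$ in total. Every other radius then lies in a geometric grid of $O(\log(k/\eps)/\eps)$ values. Guessing the whole sorted vector therefore costs $(\log(k/\eps)/\eps)^{O(k)}=2^{O(k\log(k/\eps))}$ branches.

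\textbf{Iterative ball-finding.} Next we find balls one at a time. We maintain a set $U$ of "discovered" balls $\ball(p_j,2r_j)$, with $p_j\in X$. In each round we look at the points not covered by the current balls of radius $3r_j$ around discovered points and pick a point $p$ among them. Such a $p$ is taken either from a far-away region or, when ambiguous, by branching over $O(k/\eps)$ options.

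Each round ends in one of three outcomes, which form the trichotomy.
\begin{itemize}
\item[(a)] We hit a new optimal cluster $O_i$, so $\ball(p,2r_i^*)\supseteq O_i$. We guess $i$, which is $k$ branches.
\item[(b)] The point $p$ lies in a cluster that is already discovered but was expanded with a too-small radius. We refine by guessing.
\item[(c)] All remaining uncovered points number at most $z$, and we stop.
\end{itemize}
After at most $k$ rounds we know, for each optimal cluster $O_i$ that carries non-outlier points, a point $p_i\in O_i$. Then $O_i\subseteq\ball(p_i,2r_i^*)$. Moreover any $x$ with $d(x,o_i)\le r_i^*$ satisfies $O_i\subseteq \ball(x,3r_i^*)$, and in particular $o_i\in\ball(p_i,r_i^*)$.

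\textbf{Fairness via matching.} The centers $p_i$ need not satisfy the group bounds. To repair this, we build a bipartite graph with clusters $i$ on one side and groups $X_j$ on the other, the group $X_j$ having capacity $k_j$. We connect $i$ to $X_j$ iff $\ball(p_i,r_i^*)\cap X_j\neq\emptyset$. The optimal centers $o_i$ witness a feasible capacitated assignment, so a flow computation finds one. For each $i$ we then choose any $c_i\in\ball(p_i,r_i^*)\cap X_{j(i)}$ and open $\ball(c_i,3r_i^*(1+\eps))$. This ball contains $\ball(p_i,2r_i^*)\supseteq O_i$, so all non-outliers are covered. The cost is $(3+O(\eps))\opt$, and the centers respect the bounds $k_j$. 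Distinct clusters may even share a group point; if centers must be distinct, we make the matching injective on points by restricting to the balls and using a Hall-type argument with the $o_i$ as witnesses.

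\textbf{Main obstacle.} The hardest step is ball-finding in the presence of outliers. Naively sampling an uncovered point may pick an outlier, and with $z$ large we cannot afford to guess which points are outliers. The trichotomy must guarantee, deterministically and with only $O(k/\eps)$ branching per round, one of two things: either a point of a new cluster is found, or the uncovered set already has size at most $z$.

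My first attempt would be the following. Among uncovered points, order them by distance to the discovered set and consider the $z+1$ farthest candidates. By pigeonhole, if more than $z$ points remain uncovered, one of them is a non-outlier in an undiscovered cluster. The difficulty is that we cannot enumerate $z+1$ candidates in FPT time.

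To fix this, we instead look at the $(z+1)$-th farthest point under the radius guesses. Either its cluster is undiscovered, or it lies in a discovered cluster whose radius guess was too small, which is case (b). In case (b) we charge the wasted round to increasing a radius index, which bounds the total number of rounds by $O(k)$ and keeps the overall running time at $2^{O(k\log(k/\eps))}\cdot\poly(n)$.
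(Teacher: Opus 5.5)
\textbf{There is a genuine gap: you never obtain the cluster representatives that your matching step needs.} The radius guessing and the final flow step are fine on their own. If you had, for every relevant optimal cluster $O_i$, a point $p_i\in O_i$, then $o_i\in\ball(p_i,r_i^*)$ would witness a feasible capacitated assignment, and any $c_i\in\ball(p_i,r_i^*)$ gives $O_i\subseteq\ball(c_i,3r_i^*)$. The problem is producing the $p_i$ deterministically in FPT time when outliers are present.

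Your outcomes (a)--(c) are not exhaustive. The chosen point $p$ can simply be an outlier. Then it is not in a new cluster, and it is not in a discovered cluster (case (b) is empty anyway, since the whole radius vector was fixed at the start). Nor are you in the stopping case.

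The $(z+1)$-th farthest rule does not repair this. Pigeonhole puts a non-outlier somewhere among the $z+1$ farthest uncovered points, not at rank $z+1$. For example:
\begin{itemize}
\item after a large cluster at $A$ has been discovered, let the only undiscovered cluster be $2\le m\le z$ co-located points at distance $D$ from $A$;
\item let the $z$ outliers be mutually far, isolated points at distance $D/2>3r_A$ from $A$;
\item then ranks $1,\dots,m$ are cluster points, rank $z+1$ is an outlier, and more than $z$ points remain uncovered.
\end{itemize}
Any fixed rank can be fooled in this way. Branching over all $z+1$ candidates would work, but it is not FPT because $z$ is not a parameter.

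A smaller inconsistency: you measure coverage with balls $\ball(p_j,3r_j)$. After re-centering at $c_j$ you only retain $\ball(p_j,2r_j)$, so the stopping rule ``at most $z$ uncovered'' does not carry over to the output.

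\textbf{Why this is the crux.} With outliers, an FPT procedure can in general only guarantee a ball containing at least as many points as some cluster, in the style of Charikar et al. Such a ball may consist entirely of outliers, far from any center of the required group. In that case there is no $p_i\in O_i$, and the witness $o_i\in\ball(p_i,r_i^*)$ for your flow disappears. This is exactly why projection-style approaches fail here.

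\textbf{How the paper avoids this.} It never needs representatives.
\begin{itemize}
\item It first reduces, by duplication and color coding, to exactly $k$ unit color classes. Fairness is then enforced by construction: candidate centers for the guessed densest remaining cluster $\pi^*_j$ are restricted to $F_j$.
\item It greedily builds up to $4k$ disjoint densest balls of radius $r_j$ centered in $F_j$.
\item It proves a trichotomy. Either some ball is nearby to $\pi^*_j$, and expanding it by $3$ covers $\pi^*_j$. Or some ball contains at least $|\pi^*_j\cap C'_i|$ free points; it is expanded by $3$ and $\pi^*_j$ is charged to those free points, even if the ball misses $\pi^*_j$. Or two light balls are both nearby to another unmarked cluster $\pi^*_\ell$; then $\pi^*_j$ and $\pi^*_\ell$ are settled together with radii $r_j+2r_\ell$ and $2r_j+r_\ell$.
\item Coverage of at least $n-z$ points is proved by a charging scheme, not by covering every optimal cluster.
\end{itemize}
Some density-based ball selection, together with charging clusters to free points, is the missing ingredient in your proposal.
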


At a high level, our algorithm uncovers the hidden structure of an optimal clustering and settles clusters one by one while maintaining feasibility with respect to fairness constraints.
We first transform the original instance into a colorful formulation that enforces the selection of exactly one center from each group.
For this structured problem, we develop a novel \emph{\fpt\ iterative ball-finding framework} that leverages \fpt\ time to exhaustively probe dense balls of appropriate radius.
A key structural lemma shows that these dense balls reveal one optimal cluster in every iteration, enabling the algorithm to recover the full solution. See~\Cref{ss:techov} for a technical overview.

Our framework extends beyond the sum of radii objective: without any modification, it yields approximation guarantees for any monotone symmetric norm of the radii, which we denote by \emph{\fairnoroutsc}. Moreover, the algorithm is \emph{oblivious} to the choice of norm.

\begin{theorem}[Main Result]\label{thm:mainthemnorm}
	There is a deterministic algorithm that computes a $(3+\eps)$-approximate solution to \fairnoroutsc in time $2^{O(k \log (k/\eps))}\cdot \mathrm{poly}(n)$. Furthermore, assuming gap-ETH, this approximation factor is tight for any \fpt\ algorithm.
\end{theorem}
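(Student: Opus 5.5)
The proof has an algorithmic part and a hardness part, and I would handle them separately.

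\textbf{Algorithm.} First I reduce to a \emph{colorful} instance. Guess the number $k'_i \le k_i$ of optimal centers drawn from each group $X_i$. This is a composition of at most $k$ into at most $k$ nonzero parts, so there are $2^{O(k)}$ choices once zero parts are dropped. I then view the $k$ centers as $k$ colored slots, where slot $j$ must take a center from its assigned group $F_j$. Groups may repeat, since two slots can come from the same $X_i$, so distinctness of centers must also be handled.

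Next I discretize the radii. Guess the largest optimal radius $r_{\max}$; it is one of the $O(n^2)$ interpoint distances. Radii below $\eps r_{\max}/k$ are rounded up to that value. This costs at most an additive $\eps r_{\max} \le \eps f(r)$ for any monotone symmetric norm $f$ normalized so that $f(e_1)=1$, because $r_{\max} \le f(r)$. Every other radius is rounded up to a power of $(1+\eps)$, which loses a factor $(1+\eps)$ under any monotone norm. The radius profile now lies in a set of $O(\log(k/\eps)/\eps)$ values, so guessing the whole sorted profile costs $2^{O(k\log(k/\eps))}$. This enumeration is also what makes the algorithm oblivious to the norm: the output is the list of solutions, one per guess, and for each $f$ the guess matching the true optimum for $f$ lies in the list.

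For a fixed guess, I run the iterative ball-finding procedure. Keep a set $U$ of uncovered points. In each iteration, among the unsettled slots, consider the cluster with the largest guessed radius $r_j$ whose optimal cluster still contains many uncovered points. I would argue a trichotomy for the current uncovered set. In the first case, some uncovered point $p$ has a ball $\ball(p,2r_j)$ that contains an optimal cluster entirely. Then opening a center of color $F_j$ within distance $r_j$ of $p$, with radius $3r_j$, covers that cluster. In the second case, the remaining optimal clusters are small, so they can be absorbed into the outlier budget or covered by brute-force branching over $k$ candidates. In the third case, the remaining clusters are \emph{dense}: the densest ball of radius $r_j$ around an uncovered point intersects an optimal cluster of radius at least about $r_j$. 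Then expanding to radius $3r_j$ settles that cluster.

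Branching over which slot is settled and which candidate dense point is used (the top $k$ densest choices) gives $k^{O(k)}$ branches. The fairness of each opened center is guaranteed, because a point of color $F_j$ inside the optimal ball lies within distance $2r_j$ of any point of that cluster. Distinct centers can be enforced with a matching between slots and candidate centers. In every branch the radii are at most $3(1+\eps)$ times the guessed profile, so by monotonicity and homogeneity of $f$ the cost is at most $(3+O(\eps))\,\opt_f$.

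\textbf{Main obstacle.} The hardest step is the structural lemma that some branch always settles an optimal cluster while staying within the outlier budget. Outliers could make dense balls come from outlier points rather than from clusters. I would first try a counting argument: if every top-$k$ densest ball misses all remaining optimal clusters, then the uncovered points outside the optimal clusters exceed $z$, which is a contradiction.

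\textbf{Hardness.} For tightness, I would invoke the known Gap-ETH hardness of $(3-\eps)$-approximating colorful/fair $k$-center, equivalently matroid $k$-center in \fpt\ time, which is the $\ell_\infty$ case of the norm objective. For the sum objective, I would reuse the same reduction, in which all optimal radii are equal and the gap instance forces some radius to be $\ge 3$ times larger. Since $\ell_\infty$ is a monotone symmetric norm, the factor $3$ cannot be improved in general.
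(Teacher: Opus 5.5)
Your proposal has a genuine gap at its central step. The structural lemma is asserted, not proved, and the cases you sketch do not hold.

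\textbf{Why the sketched cases fail.}
\begin{itemize}
\item Small remaining clusters cannot be ``absorbed into the outlier budget'', because the optimum may already use all $z$ outliers.
\item Your counting for the main obstacle is false. $k$ disjoint densest balls made only of free points need only $k\cdot|\pi^*_j\cap C'|$ free points, which can be at most $z$, so nothing is contradicted. Such balls must instead be \emph{used}: the remaining points of $\pi^*_j$ are charged to their free points.
\item Expanding a radius-$r_j$ ball that hits $\pi^*_\ell$ to radius $3r_j$ contains $\pi^*_\ell$ only if $r_\ell\le r_j$; you state the opposite. Even then, the opened center has the wrong color for $\pi^*_\ell$, so slot $\ell$ is still owed while $\pi^*_j$ may stay uncovered. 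This color mismatch is exactly why the Charikar et al.\ densest-ball argument does not transfer.
\item Your first case presupposes a point $p\in\pi^*_j$, which bounded branching cannot supply.
\end{itemize}

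\textbf{The missing idea (the paper's trichotomy).} Let $j$ be the color of the densest unmarked cluster and $t_j=|\pi^*_j\cap C'|$. Greedily build up to $4k$ pairwise disjoint densest radius-$r_j$ balls centered at color-$j$ facilities.
\begin{itemize}
\item If fewer than $4k$ are built, one of them is centered at $o^*_j$.
\item Otherwise, with no nearby ball, each ball has at least $t_j$ points. Unmarked-cluster points total at most $kt_j$, so fewer than $2k$ balls are dense and more than $2k$ are light.
\item A light ball with at least $t_j$ free points is good.
\item Otherwise every light ball hits some other unmarked cluster. By pigeonhole, two of them, $\hat B_1$ and $\hat B_2$, hit the same $\pi^*_\ell$.
\end{itemize}
In the last case, open $(\hat t_1,r_j+2r_\ell)$, plus a color-$\ell$ facility within $r_j+r_\ell$ of $\hat t_2$ with radius $2r_j+r_\ell$. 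These two balls cover $\pi^*_\ell$, and they also cover at least $t_j$ free points, which pay for $\pi^*_j$. Covering $n-z$ points then follows from a charging argument: each optimal point is charged to itself or to a distinct free point, and used free points are deleted. The outlier budget plays no role in this. Note also that the resulting radius vector has the form $R+2\sigma(R)$ for a permutation $\sigma$. So the bound $3\|R\|$ relies on symmetry of the norm, not on the slot-wise $3(1+\eps)r_j$ bound you claim.

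\textbf{Reduction.} Guessing $k'_i$ for every group is not a $2^{O(k)}$ enumeration. The number of groups $t$ can be $\Theta(n)$, and after dropping zero parts you must still choose \emph{which} groups are nonzero. That costs about $t^k=n^{O(k)}$ branches, which is XP, not FPT. The paper avoids this as follows. It makes $k_i$ unit-bound copies of each $X_i$ in a supplier instance, which also makes repeated centers harmless. It then color-codes these copies into exactly $k$ classes. The optimum's copies receive distinct colors with probability $k!/k^k$, and derandomization yields $2^{O(k)}$ colorful instances.

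\textbf{Hardness.} Your route through the $\ell_\infty$ special case is the paper's. The concrete source should be the $(3-\eps)$-hardness of $k$-center with outliers and forbidden centers (Charikar et al.). Its reduction preserves $k$ and starts from Max $k$-Coverage, which is Gap-ETH-hard beyond $1-1/e$ for FPT algorithms (Manurangsi). Forbidden centers are encoded as a group with upper bound $0$. Your extra claim for the sum objective is wrong, since tripling one radius does not triple the sum, but it is also unnecessary.
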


First, we remark that existing algorithms for sum of radii under additional constraints are often tailored to the sum objective and do not extend to other norms, including even the max objective, which is arguably the simplest~\cite{inamdar_et_al:LIPIcs.ESA.2020.62,bandyapadhyay_et_al:LIPIcs.SoCG.2023.12,DBLP:conf/isaac/CartaDHR024,DBLP:conf/approx/BandyapadhyayC25}.

Second, we explain what we mean by our algorithm being \emph{oblivious} to the norm objective.
Since optimal solutions for different norms can be far apart, a single solution cannot, in general, approximate all norms simultaneously. We circumvent this by outputting a small list of candidate solutions that collectively cover all monotone symmetric norms. 
Specifically, given an instance of \fairnoroutsc, the algorithm outputs a family  $\cS$ of $k$ balls of size $2^{O(k \log (k/\eps))}$ in time $2^{O(k \log (k/\eps))}\cdot \mathrm{poly}(n)$ such that, for every monotone symmetric norm $f$, the list $\cS$ contains a $(3+\eps)$-approximate solution to the instance when the objective is given by $f$.
We also note that similar list-based approaches have been developed for the $k$-means objective in Euclidean space~\cite{DBLP:journals/jco/FengHHW19,bhattacharya2018faster,ding2020unified}. 
While these algorithms output a list of candidate solutions that handle multiple constraints for $k$-means, they rely crucially on the geometric structure of Euclidean space and the specific properties of the $k$-means objective.
In contrast, our framework targets a specific combination of constraints, but is essentially oblivious to both the objective and the underlying metric, and instead exploits structural properties of optimal clustering via the FPT iterative ball-finding framework.

Finally, our framework extends to a more general fairness notion, called \emph{fair-range} constraints, in which each demographic group is subject to both lower and upper bound requirements. This notion has also been recently studied for objectives such as $k$-center, $k$-median, and $k$-means~\cite{DBLP:conf/icml/HotegniMV23, DBLP:journals/corr/NNJ,
	DBLP:conf/nips/ZhangCLCHF24,DBLP:journals/corr/TGOG,DBLP:journals/corr/GT26}.

\begin{restatable}{theorem}{fairangethm}\label{thm:fairrange}
	There is a deterministic $(3+\eps)$-approximation algorithm for fair-range norm of radii with outliers running in time $2^{O(k \log (k/\eps))}\cdot \mathrm{poly}(n)$.
\end{restatable}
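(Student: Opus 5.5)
The plan is to reduce fair-range constraints to the upper-bound-only (or colorful) setting already handled by \Cref{thm:mainthemnorm}, via FPT enumeration over how the $k$ centers are distributed among the groups. In the fair-range problem each group $X_i$ has bounds $\ell_i \le |S\cap X_i| \le u_i$, and we may assume $\sum_i \ell_i \le k$ (otherwise the instance is infeasible). In an optimal solution $S^*$ at most $k$ groups contain a center. So the relevant information is a vector $(k_1,\dots,k_t)$ with $\ell_i\le k_i\le u_i$ and $\sum_i k_i = k$, whose support has size at most $k$. Enumerating such vectors directly is not FPT when $t$ is large. Instead I would pass to the colorful formulation that the paper already uses internally: an instance in which we must pick exactly one center from each of $k$ given color classes $F_1,\dots,F_k$ (possibly overlapping subsets of $X$).

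First, the groups with $\ell_i>0$ are forced. There are at most $k$ of them, and they must receive at least $\ell_i$ centers each. The remaining $k-\sum_i \ell_i$ slots go either to forced groups (up to $u_i$) or to unforced groups. Since a group with $\ell_i=0$ has no lower bound, we can merge all unforced groups with $u_i\ge 1$ into a single ``free'' color class. The resulting assignment might put two centers in an unforced group with $u_i=1$, which would violate its upper bound. I would handle this with a color-coding / random-partition argument, derandomized with perfect hash families (splitters): hash the at most $k$ groups hit by $S^*$ into $k$ labels so that distinct hit groups receive distinct labels. Each label class then becomes one color class $F_j$ consisting of the union of the groups with that label. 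For forced groups we additionally guess, in $k^{O(k)}$ ways, which of the $k$ slots they occupy. This produces $2^{O(k\log k)}\,\mathrm{poly}(n)$ colorful instances. The key property is that in at least one of them $S^*$ is a feasible colorful solution. Moreover, every colorful solution of every instance satisfies both the lower and the upper bounds, because each slot is tied to a prescribed group or to distinct groups.

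Second, I would run the colorful norm algorithm underlying \Cref{thm:mainthemnorm} on each instance. Since it is oblivious to the norm, it returns a list of candidates for every norm $f$, and we output the best feasible candidate under $f$. On the good instance, the colorful optimum is at most $\opt$, so we get a $(3+\eps)$-approximation. The total running time is $2^{O(k\log(k/\eps))}\mathrm{poly}(n)$.

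The main obstacle is ensuring the colorful reduction preserves fair-range feasibility exactly. Upper bounds $u_i>1$ are not automatically enforced when several slots share a group. I would address this by guessing the exact multiplicity $k_i$ of each of the at most $k$ hit groups: guess the partition of the $k$ slots into blocks corresponding to distinct groups, which has $k^{O(k)}$ possibilities, and require all slots in a block to map into the same hashed group. This needs the colorful framework to support the constraint that a prescribed set of slots lies in a common group. If the iterative ball-finding argument does not directly support this, the fallback is to treat each block's group as a ``super color'' with multiplicity $k_i$ and check that the structural lemma still settles one cluster per iteration with the multiplicity decremented.
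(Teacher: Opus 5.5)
There is a genuine gap, and it is the one you flag yourself at the end. The reduction you describe never yields a colorful instance that both (a) contains $S^*$ as a feasible colorful solution and (b) has only fair-range-feasible colorful solutions. In the colorful problem every color class supplies exactly one center. So if each label class (or your single ``free'' class) is one color class, no group handled by hashing can receive two centers, and $S^*$ is not colorful as soon as $|S^*\cap G_i|\ge 2$ for such a group. If instead several slots may share a label class, a colorful solution can put all of them into a single group $G_i$ with small $u_i$. Then your claim that ``every colorful solution of every instance satisfies both the lower and the upper bounds'' fails. Your repair is either to force a prescribed block of slots into a common group, or to re-prove \Cref{lem:charging lemma} for ``super colors'' with multiplicities. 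Either way, \Cref{algo:colksupalgo} would have to handle a constraint it does not support, and you leave this unverified. As written, the proof does not go through.

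The missing idea, which is how the paper proceeds, is to split multiplicities \emph{before} color-coding. Replace each $G_i$ by $\ell_i$ equality copies and $u_i-\ell_i$ slack copies, each a unit-requirement group of distinct facility copies (as in \Cref{lem:redfu}). Then color-code these copy-groups into $k$ classes as in \Cref{lem:reduc}. The centers of $S^*$ now occupy $k$ distinct copy-groups: $\ell_i$ of the centers in $G_i$ go to the equality copies, and the remaining $|S^*\cap G_i|-\ell_i\le u_i-\ell_i$ go to distinct slack copies. So $S^*$ is colorful with probability $2^{-O(k)}$, and this is derandomizable. Conversely, each copy-group lies inside one color class and so contributes at most one center. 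Hence every colorful solution automatically takes at most $u_i$ centers from $G_i$.

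For the lower bounds to survive the merging, keep each of the $\sum_i\ell_i\le k$ equality copies as its own color class; this is where your dedicated slots for forced groups belong. Color-code only the slack copies into the remaining $k-\sum_i\ell_i$ classes. \Cref{algo:colksupalgo} and its analysis then apply verbatim.

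Your own route can also be rescued without touching the algorithm. After guessing the multiplicity $m_j$ of the hit group carrying label $j$, put into label class $j$ only the unforced groups with $u_i\ge m_j$, and use $m_j$ copies of that class. Then any $m_j$ centers drawn from it place at most $m_j\le u_i$ centers in each of its groups.
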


The proof follows by reducing fair-range constraints to unit-requirement groups and applying the same structured reduction and iterative ball-finding algorithm developed for the upper-bound setting (see Section~\ref{sec:extn}).

\paragraph*{Hardness of Approximation.}
The hardness result of~\Cref{thm:mainthemnorm} follows from the hardness of approximation for fair \kcen with outliers, which is a special case of \fairnoroutsc. Specifically, Charikar et al.~\cite{DBLP:conf/soda/CharikarKMN01} studied $k$-center with outliers in the presence of \emph{forbidden centers}, where certain points are disallowed from being chosen as centers. They proved that for any $\varepsilon>0$, it is NP-hard to approximate this problem within a factor of $3-\varepsilon$.
Moreover, their reduction preserves the parameter $k$. It originates from \textsc{Max $k$-Coverage}, which is hard to approximate to a factor better than $1-1/e$ for \fpt\ algorithms under gap-ETH~\cite{Man20}.\footnote{Informally, gap-ETH says that there exists an $\epsilon>0$ such that no $2^{o(N)}$ time algorithm for $3$-SAT on $N$ variables can distinguish whether a given $3$-SAT formula has a satisfying assignment or every assignment satisfies at most $(1-\epsilon)$ fraction of the clauses.} 

The forbidden-centers variant can be viewed as a special case of fair $k$-center with outliers. Indeed, partition the point set into two groups: one consisting of forbidden centers with an upper bound of $0$, and the other containing all remaining points with an upper bound of $k$. Any feasible fair solution is then forced to select all centers from the allowed group, exactly reproducing the forbidden-centers constraint. Since the reduction preserves the parameter $k$, it implies that no \fpt\ algorithm can achieve a $(3-\varepsilon)$-approximation for fair $k$-center with outliers under gap-ETH.

\subsection{Technical Overview}\label{ss:techov}

We present the key ideas for the sum of radii objective (i.e.,~\Cref{thm:mainthem}). 
The algorithm for the norm objective builds on the same ideas and is presented in the main technical section.

\subsubsection{Challenges in Existing Techniques}

As mentioned earlier, the sum of radii (SoR) objective has been studied separately under fairness and outlier constraints. 
A natural approach is therefore to adapt these techniques to our setting, where both constraints are present simultaneously. 
However, both directions encounter fundamental obstacles.

\paragraph*{Projection-based approaches.}
A prominent paradigm in fair clustering is the \emph{projection} framework~\cite{DBLP:conf/icml/JonesNN20,DBLP:conf/www/GadekarGT25}, which proceeds in two phases: first, compute an approximate solution ignoring fairness constraints; second, project this solution to a fair one, typically incurring an additional additive loss in the approximation factor.
For the fair sum of radii problem, \cite{ChenXXZ24} first obtain a $(2+\eps)$-approximation in \fpt\ time ignoring fairness constraints, which is then converted into a $(3+\eps)$-approximate fair solution. 

Concretely, given an approximate (non-fair) center set $T$, each center in $T$ is matched to a nearby center in an optimal fair solution, exploiting the fact that such a center must exist within the radius of its optimal ball.
However, this approach fundamentally fails in the presence of outliers.
Even though a $(3+\eps)$-approximation is known for unconstrained sum of radii with outliers~\cite{BuchemERW24}, the centers in $T$ may correspond to points that are outliers in the optimal fair solution.
In such cases, there is no nearby fair center to match to, and the projection step breaks down.

\paragraph*{Outlier-based approaches.}
Another natural direction is to adapt combinatorial algorithms for clustering with outliers.
Consider the classical algorithm of Charikar et al.~\cite{DBLP:conf/soda/CharikarKMN01} for $k$-center with outliers.
Their algorithm iteratively selects a densest ball of radius\footnote{We denote by $\opt$, the optimal cost of the problem's instance.} $\opt$, expands it to radius $3\opt$, and removes the covered points.
A key structural property is that optimal clusters can be ordered as $\pi^*_1,\dots,\pi^*_k$ such that for every $j \in [k]$,
\[
\big|\bigcup_{i=1}^j E_i\big| \;\ge\; \big|\bigcup_{i=1}^j \pi^*_i\big|.
\]

Intuitively, if the selected ball intersects the densest remaining optimal cluster $\pi^*_j$, then its expansion by a factor $3$ fully covers $\pi^*_j$, by triangle inequality.
Otherwise, the ball is disjoint from all remaining optimal clusters, and by a density argument must contain at least as many points as $\pi^*_j$, allowing these points to be charged to $\pi^*_j$.

Extending this approach to the fair setting introduces new difficulties.
A natural idea is to guess the color of the center of the densest remaining optimal cluster, which leads to an \fpt\ running time in $k$.
However, even after restricting to a fixed color, the selected densest ball may intersect a \emph{different} optimal cluster whose center has another color.\footnote{This is not a problem in analyzing the algorithm of~\cite{DBLP:conf/soda/CharikarKMN01} since the ordering of the optimal clusters is chosen after the algorithm terminates. In our case, if we chose the ordering later, then the ball's center color may mismatch with that of the intersecting optimal cluster.}
In this case, expanding the ball prematurely settles the wrong cluster, while the intended cluster remains uncovered.
Thus, the color constraints are violated and there is no mechanism to correct this choice.
Furthermore, in the sum of radii objective, unlike $k$-center, one cannot freely charge the cost of a solution ball to an optimal cluster multiple times, which introduces additional complications.

These challenges motivate a different approach. In contrast, we design an algorithm that {directly constructs a fair solution}, bypassing two-phase framework altogether. Our approach consists of two main components: an approximation-preserving reduction to a highly structured colorful instance, and a  \fpt-time iterative ball-finding algorithm for solving this structured problem.

\subsubsection{From fair constraints to colorful structure}\label{ss:techov:redcol}

Our first step is an approximation-preserving reduction from \fairsoroutsc to a structured instance with unit group requirements. 
The main goal of this step is to reduce the instance to exactly $k$ groups with unit requirements while preserving feasibility and cost.

Towards this, we first reduce the problem to an instance with unit group requirements, and then merge the groups to obtain exactly $k$ groups, all while preserving feasibility and cost.
To maintain the outlier constraint, we reduce the problem to a supplier version. 
In this version, we are given two sets $F$ and $C$, where $F$ is partitioned into $t$ demographic groups. 
The task is to select $k$ centers $S \subseteq F$ and radii such that the union of the corresponding balls covers at least $|C|-z$ points from $C$, while satisfying the group constraints $|S \cap F_i| \le 1$ for all $i$.

The reduction proceeds by duplicating each demographic group $X_i$ according to its allowed quota, creating multiple copies in $F$, and producing a collection of groups from which at most one center can be selected per copy.
We set $C = X$. 
Any feasible solution to this supplier instance with unit group requirements is also feasible for the original problem, and vice versa.
A major difficulty is that this duplication can create a number of groups far exceeding $k$, making the instance unsuitable for algorithms whose complexity depends exponentially on the number of groups.
In particular, it is not clear whether one can further reduce this problem to exactly $k$ groups while preserving the optimal cost using only polynomial-time transformations.

\paragraph*{Merging groups in \fpt\ time.}
Here, the power of \fpt\ time in $k$ becomes crucial.
We apply the color-coding technique to randomly assign each group in $F$ to one of $k$ colors, and merge all groups assigned the same color into a single color class with unit requirement.
This produces an instance of \emph{\colkcenosc}, where a feasible solution must select exactly one center from each color class.
Standard arguments show that, with probability at least $2^{-\Omega(k)}$, this transformation preserves the optimal cost. 
Moreover, with the same probability, any feasible solution to the original instance remains feasible in the resulting colorful instance.

The reduction to unit group requirements is essential for this step.
Since each group contributes at most one center, merging groups preserves feasibility by enforcing the selection of exactly one center per color class.
In contrast, applying color coding directly to groups with general upper bounds would require determining appropriate bounds for the merged groups, which depend on the unknown optimal solution.

By derandomizing this procedure, we generate $2^{O(k)}$ colorful instances, one of which preserves the optimal cost of the original instance. 
Thus, \fpt\ time allows us to compress the instance to exactly $k$ groups without losing optimality.
Consequently, it suffices to design an \fpt\ approximation algorithm for the colorful problem.



\subsubsection{$3$-approximation for \colkcenosc in FPT time}\label{ss:techov:3apxcol}

For a cleaner exposition, we work with the supplier version, where $X = (F \cup C)$, and the solution consists of $k$ balls centered at facilities in $F$ that cover at least $n-z$ points from $C$, where $n = |C|$. 
Moreover, $F$ is partitioned into $k$ color classes $F_1,\dots,F_k$, and the solution must select exactly one center from each $F_i$.
Let $(O^*, R^*)$ be an optimal solution, and let $\Pi^* = \{\pi^*_1,\dots,\pi^*_k\}$ be the corresponding optimal clustering. 
Without loss of generality, assume that the center of $\pi^*_i$ has color $i$ and radius $r^*_i$. 
Using standard discretization, we assume that approximate values of $R^*$ are known.
At a high level, the algorithm (see~\Cref{algo:colksupalgo}) proceeds in phases, settling at least one optimal cluster in each phase. 
In a phase corresponding to cluster $\pi^*_i$, the algorithm adds a ball of appropriate radius that fully accounts for $\pi^*_i$, and removes its covered points. 
Thus, after at most $k$ phases, all clusters are settled.

\paragraph*{Coverage via Charging.}
To bound coverage, we use a charging scheme similar to~\cite{DBLP:conf/soda/CharikarKMN01}. 
Each point in an optimal cluster either charges itself (if it is covered by a chosen ball) or is charged to a \emph{free point}, i.e., a point that is either an outlier or does not belong to any remaining optimal cluster.
A key property of this scheme is that each point is charged at most once: once a point is used for charging, it is removed from further consideration. 
Thus, no free point is used more than once, and points belonging to optimal clusters are also charged at most once. 


\paragraph*{Our Algorithm.}
The main technical contribution is an \emph{iterative ball-finding framework} that uncovers the structure of the optimal clustering in \fpt\ time.
In each phase, we implicitly target the densest remaining optimal cluster $\pi^*_j$. 
While $\pi^*_j$ is unknown, we can guess its color $j \in [k]$, allowing us to restrict attention to centers in $F_j$.
Next, the algorithm invokes the following iterative ball-finding subroutine to construct a set of disjoint balls of radius $r^*_j$ centered at points in $F_j$.

\paragraph*{Iterative Ball-Finding via FPT Search.}
Figure~\ref{fig:overview} illustrates the high-level structure of our iterative ball-finding subroutine.
In more details, for the correct guess $j \in [k]$ of the color the optimal $\pi^*_j$, we exploit the fact that the densest ball of radius $r^*_j$ in the remaining points with center color  $j$ must contain as many points as  the remaining points of $\pi^*_j$.
Accordingly, the subroutine repeatedly constructs a set $\cB$ of (up to) $4k$ disjoint balls of radius $r^*_j$ centered at (disjoint) points in $F_j$, always choosing the densest  ball in the remaining points.\footnote{For e.g., by temporarily deleting the points and the center of the ball.}

\emph{Terminology.}
Let $B \in \cB$. 
We say:
\begin{itemize}
	\item $B$ is \emph{nearby} to $\pi^*_j$ if its center is within distance $2r^*_j$ of the optimal center $o^*_j$;
	\item $B$ is \emph{good} if it contains at least as many free points as the remaining points of $\pi^*_j$;
	\item $B$ is \emph{dense} if more than half of its points belong to remaining optimal clusters, and \emph{light} otherwise.
\end{itemize}

\paragraph*{Structural Trichotomy.}
We prove that among the balls in $\cB$, at least one of the following configurations must occur:
\begin{itemize}
	\item a {nearby ball} to $\pi^*_j$;
	\item a {good ball} for $\pi^*_j$;
	\item  two {light balls} for $\pi^*_j$ whose union contains sufficient {free} points to simulate $\pi^*_j$ and both of which are nearby balls to (another) uncovered optimal cluster $\pi^*_{\ell}$.
\end{itemize}

\begin{figure}[h!]
	\centering
	\begin{tikzpicture}[
		node distance=2.2cm,
		every node/.style={
			draw, rectangle, rounded corners,
			align=center, minimum width=3.0cm, minimum height=0.9cm
		},
		case/.style={fill=gray!15},
		arrow/.style={->, thick}
		]
		
		\node (start) {Remaining points};
		\node (pick) [below of=start] {Implicitly target densest\\uncovered optimal cluster $\pi^*_j$};
		\node (balls) [below of=pick] {Construct densest disjoint balls\\of radius $r^*_j$ of  center color $j$};
		
		\node[case] (near) [below left=2.2cm and 1.8cm of balls] {Nearby ball exists};
		\node[case] (good) [below=2.2cm of balls] {Good ball exists};
		\node[case] (two)  [below right=2.2cm and 1.8cm of balls] {Two light balls close\\to same cluster $\pi^*_\ell$};
		
		\node (settle1) [below of=near] {Expand to $3r^*_j$\\settle cluster $\pi^*_j$};
		\node (settle2) [below of=good] {Expand to $3r^*_j$\\settle cluster $\pi^*_j$};
		\node (replace) [below of=two] {Find correct-colored\\nearby center for $\pi^*_\ell$\\Expand to $3r^*_j+3r^*_\ell$\\settle cluster $\pi^*_\ell$};
		\node (settle3) [below of=replace] {Zero radius cluster at \\center of any light ball\\settle cluster $\pi^*_j$};
		
		\node (remove) [below=2.4cm of settle2] {Remove covered points and centers\\Iterate on remainder};
		
		\draw[arrow] (start) -- (pick);
		\draw[arrow] (pick) -- (balls);
		
		\draw[arrow] (balls) -- (near);
		\draw[arrow] (balls) -- (good);
		\draw[arrow] (balls) -- (two);
		
		\draw[arrow] (near) -- (settle1);
		\draw[arrow] (good) -- (settle2);
		\draw[arrow] (two) -- (replace);
		\draw[arrow] (replace) -- (settle3);
		
		\draw[arrow] (settle1) |- (remove);
		\draw[arrow] (settle2) -- (remove);
		\draw[arrow] (settle3) |- (remove);
		
	\end{tikzpicture}
	\caption{Flow of the iterative ball-finding algorithm for \colkcenosc. Lightly shaded boxes correspond to the three cases of the structural trichotomy, each enabling the algorithm to place a factor $3$ radius ball that settles at least one optimal cluster while preserving color constraints.}
	\label{fig:overview}
\end{figure}

Figure~\ref{fig:ballstri} shows an execution of the iterative ball-finding subroutine.
The existence of this trichotomy follows from a counting argument. 
If fewer than $4k$ disjoint balls can be constructed, then one of them must be centered at the optimal center of $\pi^*_j$, yielding a nearby ball. 
Otherwise, consider the case where $|\cB| = 4k$ and no nearby ball exists. 
Since each ball is chosen to be densest, each contains at least as many points as the remaining points of $\pi^*_j$. 
If many balls are dense, then their total contribution from remaining optimal clusters exceeds the total number of such points, leading to a contradiction. 
Thus, at least $2k$ balls must be light. 
If one of them is good, we obtain the second case. 
Otherwise, by a pigeonhole argument, two light balls must intersect the same cluster $\pi^*_\ell$, yielding the third case. 
Moreover, since there is no nearby ball, all balls in $\cB$ are disjoint from $\pi^*_j$, and hence each contains at least as many points as the remaining points of $\pi^*_j$. 
Since each light ball has at least half of its points as free points, the union of the two light balls contains at least as many free points as the remaining points of $\pi^*_j$.

		
		
		
		
		
		
		
		

\begin{figure}
	\centering
	\includegraphics[width=0.8\linewidth]{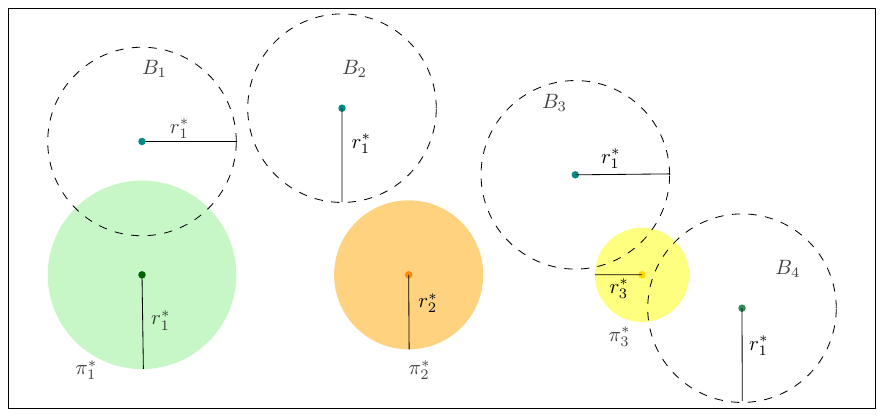}
	\caption{\small The rectangle represents the full point set, while the colored discs correspond to the optimal clusters, with the remaining area representing outliers.
		The relative sizes of the discs illustrate the densities of the clusters. For simplicity, the center color classes are not shown.
		The iterative ball-finding subroutine is invoked for the densest uncovered optimal cluster, which is  green ($\pi^*_1$), and some balls constructed by the subroutine are depicted by dashed boundaries.
		Here, $B_1$ is a nearby ball to $\pi^*_1$, $B_2$ is a good dense ball for $\pi^*_1$, and $B_3$ and $B_4$ are light balls intersecting another uncovered cluster $\pi^*_3$, whose centers both lie within distance $(r^*_1+r^*_3)$ of the center of $\pi^*_3$.
	}
	\label{fig:ballstri}
\end{figure}

\paragraph*{Using the trichotomy.}
Each case of the trichotomy yields a way to construct a feasible solution while making progress.

\begin{itemize}
	\item \textbf{Nearby ball.}
	Let $\hat{B}$ be a nearby ball to $\pi^*_j$, and let $\hat{t}$ be its center. 
	Since $\hat{t} \in F_j$, selecting $\hat{t}$ preserves the color constraint for cluster $\pi^*_j$. 
	By expanding $\hat{B}$ to radius $3r^*_j$, we obtain a ball that fully covers $\pi^*_j$. 
    	We charge points of $\pi^*_j$ to themselves.
	
	\item \textbf{Good ball.}
	Let $\hat{B}$ be a good ball for $\pi^*_j$ with center $\hat{t} \in F_j$. 
	Although $\hat{B}$ may not intersect $\pi^*_j$, it contains enough free points to simulate it. 
	We expand $\hat{B}$ to radius $3r^*_j$. 
	We charge the points of $\pi^*_j$ as follows: points already covered by previously chosen balls charge themselves, while the remaining points are charged to distinct free points in the expanded ball. 
	Such a charging is possible since $\hat{B}$ contains at least as many free points as the remaining points of $\pi^*_j$. 
	Since $\hat{t}$ has color $j$, the color constraint is preserved.
	
	\item \textbf{Two light balls.}
	Let $\hat{B}_1$ and $\hat{B}_2$ be two light balls that intersect another cluster $\pi^*_\ell$. 
	Let $\hat{t}_1$ be the center of $\hat{B}_1$, where $\hat{t}_1 \in F_j$. 
	We construct a ball $\hat{E}$ centered at $\hat{t}_1$ with radius $3r^*_j + 2r^*_\ell$, which covers both $\hat{B}_1 \cup \hat{B}_2$ and the cluster $\pi^*_\ell$.
	
	We charge the points of $\pi^*_j$ as follows: points already covered charge themselves, while the remaining points are charged to free points in $\hat{B}_1 \cup \hat{B}_2$. 
	This is feasible since the two light balls together contain enough free points by the structural trichotomy. 
	The points of $\pi^*_\ell$ are fully covered by $\hat{E}$ and charge themselves. 
	
	To satisfy the color constraint for $\pi^*_\ell$, we additionally open a zero-radius ball at any center in $F_\ell$. 
	Thus, we settle both clusters $\pi^*_j$ and $\pi^*_\ell$ while preserving the color constraints.
	
\end{itemize}

In all cases, we settle at least one optimal cluster while maintaining feasibility. 
In the third case, we may settle two clusters simultaneously, yielding a total cost bounded by $3r^*_j + 2r^*_\ell$. 
Since each optimal radius is charged at most once across all iterations, this implies a $3$-approximation.
For general norm objectives, this step requires additional care (see~\Cref{ss:analysis}).

\subsubsection{The final algorithm}\label{ss:techov:final}

Using the reduction from~\Cref{ss:techov:redcol}, we first transform the given instance of fair SoR with outliers into $2^{O(k)}$ instances of colorful SoR with outliers in deterministic time $2^{O(k)}\poly(n)$. 
To solve each colorful instance, we combine two ingredients: (i) enumeration over approximate radii, and (ii) branching over the choices in the iterative ball-finding procedure.

For (i), it is known that, in time $2^{O(k\log(k/\eps))}\poly(n)$, one can construct a list $\cL$ of $k$-tuples of radii that contains an approximate optimal radii vector~\cite{DBLP:journals/corr/FG25}. 
For (ii), the branching arising from the iterative ball-finding framework can be bounded by $2^{O(k\log(k/\eps))}$.

We therefore proceed as follows: for each colorful instance and for each candidate radii vector in $\cL$, we run the branching version of the algorithm described in~\Cref{ss:techov:3apxcol}. 
Among all solutions obtained, we return a feasible solution that minimizes the given norm of the radii.
The overall running time is bounded by $2^{O(k \log (k/\eps))}\cdot \mathrm{poly}(n)$.

We remark that the above procedure naturally yields a small list of candidate solutions that simultaneously covers all monotone symmetric norms. 
Specifically, for each candidate radii vector in $\cL$, we store the corresponding feasible solution (if any) produced by the algorithm across all colorful instances. 
For any fixed norm $f$, the optimal solution corresponds to some radii vector in $\cL$, and hence the corresponding stored solution provides a $(3+\eps)$-approximation for the objective defined by $f$. 
\section{Preliminaries}

Given a metric space $(X,d)$, for $x \in X$ and $r \in \mathbb{R}_{\geq 0}$, we denote by $\ball(x,r)$ the ball centered at $x$ with radius $r$, i.e., $\ball(x,r):= \{p\in X: d(x,p)\le r\}$. Additionally, for $Y \subseteq X$, let $\ball_Y(x,r) = \ball(x,r) \cap Y$.
We now define the problems of our interest.

\begin{definition}[\fairsorout]
	Given a metric space \met on $n$ points, two non-negative integers $k$ and $\outcnt$, and a partition $\groups=\{X_1,\dots, X_t\}$ of \points along with an integer $k_i \geq 0$ for each group $X_i \in \groups$, the task is to find a set $S=\{s_1,\dots,s_k\} \subseteq \points$ of size $k$ and a radii set $R=\{r_1,\dots,r_k\} \subseteq \mathbb{R}_{\ge 0}$ so as to minimize $\sum_{i \in [k]} r_i$, such that:
	\begin{itemize}
		\item (fairness constraints) $|S \cap X_i| \leq k_i$ for every $X_i \in \groups$, and
		\item (outlier constraint) $\big|\bigcup_{i \in [k]} \ball_X(s_i,r_i)\big| \ge n-z$.
	\end{itemize}
\end{definition}

We denote by $\cI=\foins$ an instance of \fairsoroutsc.  
For a given solution $(S,R)$, we define its cost as $\cost_{\cI}((S,R)) = \sum_{r \in R} r$.  
The points covered by $(S,R)$, i.e., those in $\bigcup_{i \in [k]}\ball_X(s_i,r_i)$, are called \emph{inliers}, and the remaining points are called \emph{outliers}, denoted by $Z$.
Furthermore, $(S,R)$ is said to be a \emph{feasible solution} if $|S| = k$ and it satisfies both the fairness constraints and the outlier constraint, i.e., $|S \cap X_i| \le k_i$ for every $X_i \in \groups$ and at least $n-z$ points are covered.
Without loss of generality, we assume that $\sum_{i \in [t]} k_i \geq k$, since otherwise every set of size $k$ is infeasible.\footnote{In this case, one can decrease $k$ so that the instance becomes feasible.}

\medskip

Next we define a variant problem with unit group requirements that will be useful for our algorithms. We call this problem \colksupo, and is defined in the \emph{supplier} setting. Here the point set is partitioned into a \emph{facility set} $F$ and a \emph{client set} $C$, along with a metric $d'$ defined between $F$ and $C$. The groups are defined over $F$, and the task is to cover at least $|C|-z$ clients from $C$ using a solution $(S,R)$ with $S \subseteq F$, $|S|=k$. Moreover, there are precisely $k$ groups over $F$, and a feasible solution must select exactly one center from each group.

\begin{definition}[\colksupo]
	Given a metric space \mets, two non-negative integers $k$ and $\outcnt$, and a $k$-partition $\cF=\{F_1,\dots, F_k\}$ of $F$, the task is to find a set $S=\{s_1,\dots,s_k\} \subseteq F$ of size $k$ and a radii set $R=\{r_1,\dots,r_k\} \subseteq \mathbb{R}_{\ge 0}$ such that:
	\begin{itemize}
		\item $|S \cap F_i| = 1$ for every $F_i \in \cF$, and
		\item $\big|C \cap \bigcup_{i \in [k]} \ball(s_i,r_i)\big| \ge |C|-z$,
	\end{itemize}
	so as to minimize $\sum_{i \in [k]} r_i$.
\end{definition}

We denote by $\coins$ an instance of \colkcenosc.  
Given an instance $\cI$, we let $\opt(\cI)$ denote the optimal cost. When $\cI$ is clear, we simply write $\opt$.  
We also use $n$ to denote $|X|$ or $|F \cup C|$, depending on the context.  
When \fairsorout is defined in the supplier setting, we refer to it as \emph{\fairksupo}.

\medskip

We now generalize the objective to norm minimization. Given a monotone symmetric norm $||\cdot|| : \mathbb{R}^k \rightarrow \mathbb{R}_{\ge 0}$, the goal is to minimize $||R||$. For simplicity, we continue to use the same problem names, even though the objective is now a norm instead of the sum of radii.

\begin{remark}
	We define the problems assuming that $\met$ is a metric space. In some parts of our algorithm, we work with pseudometrics, where distinct points may have zero distance. Such situations arise due to duplication of points, and our algorithms are robust to this.
\end{remark}

\medskip

Next, we define the notion of an $\eps$-approximation of the optimal radii.  
Let $||\cdot|| : \mathbb{R}^k \rightarrow \mathbb{R}_{\ge 0}$ be a monotone symmetric norm.  
For a multiset $R^*=\{r^*_1,\dots,r^*_k\}$ of positive reals, we say that a multiset $R=\{r_1,\dots,r_k\}$ is an \emph{$\eps$-approximation} of $R^*$ if:
\begin{itemize}
	\item $r_i \ge r^*_i$ for all $i \in [k]$, and
	\item $||(r^*_1,\dots,r^*_k)|| \le ||(r_1,\dots,r_k)|| \le (1+\eps)||(r^*_1,\dots,r^*_k)||$.
\end{itemize}

It is known that if the largest entry in $R^*$ is known, then one can compute such an approximation in \fpt\ time.

\begin{theorem}[\cite{DBLP:journals/corr/FG25}]\label{thm:epsapxrad}
	There is an algorithm that, for any $\eps>0$ and a (partially unknown) $k$-multiset $R^*$ of positive reals whose largest entry is known, computes a list $\cL$ of size $2^{O(k \log (k/\eps))}$ containing a $k$-multiset that is an $\eps$-approximation of $R^*$. Moreover, the algorithm runs in time $2^{O(k \log (k/\eps))}$.
\end{theorem}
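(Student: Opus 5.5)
We prove \Cref{thm:epsapxrad} by rounding each unknown radius to a coarse grid relative to the known maximum $r_{\max}$, and then enumerating all possible rounded multisets. Fix $\delta = \eps/(2k)$ (any $\Theta(\eps/k)$ works). Radii that are tiny compared to $r_{\max}$ are rounded up to a common floor value. Every other radius is rounded up to the next power of $(1+\eps/2)$.

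Concretely, define the candidate value set
\[
V = \{\delta r_{\max}\} \cup \{\delta r_{\max}(1+\eps/2)^j : j = 0,1,\dots, J\},
\]
where $J = \lceil \log_{1+\eps/2}(1/\delta)\rceil = O(\eps^{-1}\log(k/\eps))$. Then $|V| = O(\eps^{-1}\log(k/\eps))$, and every $r \in (0, r_{\max}]$ has a smallest element $\rho(r) \in V$ with $\rho(r) \ge r$. The list $\cL$ will consist of all $k$-multisets over $V$. The number of these is $\binom{|V|+k-1}{k} \le (e(|V|+k)/k)^k$, which gives the bound $2^{O(k\log(k/\eps))}$ after a routine check. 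This check needs a little care when $\eps$ is tiny relative to $k$, but $\log(1/\eps) \le \log(k/\eps)$ absorbs it. Enumerating the list takes the same time.

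Correctness uses only the multiset $\rho(R^*) = \{\rho(r^*_i)\}$, which belongs to $\cL$. Coordinatewise we have $\rho(r^*_i) \ge r^*_i$, so monotonicity gives $\|R^*\| \le \|\rho(R^*)\|$. For the upper bound, split $\rho(r^*_i) \le \max\{(1+\eps/2) r^*_i,\ \delta r_{\max}\} \le (1+\eps/2)r^*_i + \delta r_{\max}$. Monotonicity and the triangle inequality then give
\[
\|\rho(R^*)\| \le (1+\eps/2)\|R^*\| + \delta r_{\max}\,\|\mathbf{1}\|.
\]

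The key step, and the only nontrivial one, is bounding $\|\mathbf{1}\|$ by $\|R^*\|$. By symmetry, $\|r_{\max} e_1\| = r_{\max}\|e_1\| \le \|R^*\|$ by monotonicity, since $R^*$ dominates $r_{\max} e_1$ after a permutation. The triangle inequality together with symmetry gives $\|\mathbf{1}\| \le k\|e_1\|$. Hence $\delta r_{\max}\|\mathbf{1}\| \le \delta k \|R^*\| = (\eps/2)\|R^*\|$, and so $\|\rho(R^*)\| \le (1+\eps)\|R^*\|$. This argument never refers to the norm, so the same list works for every monotone symmetric norm simultaneously. That norm-independence is exactly the obliviousness the paper later exploits.

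If $r_{\max}$ is not given exactly but only up to a $(1+\eps)$ factor, I would rescale $\eps$. The main obstacle is the small-radius regime, where multiplicative rounding alone would produce unboundedly many levels. The additive floor $\delta r_{\max}$, charged against $\|R^*\|$ through the symmetric-norm bound above, resolves this.
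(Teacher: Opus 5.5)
Your proof is correct and complete. Rounding above the floor $\delta r_{\max}$, with $\delta=\eps/(2k)$, gives $\rho(r^*_i)\le(1+\eps/2)r^*_i+\delta r_{\max}$ coordinatewise. The floor's contribution is then absorbed exactly as you say, via $\|\mathbf{1}\|\le k\|e_1\|$ and $r_{\max}\|e_1\|\le\|R^*\|$. The list size is at most $|V|^k=2^{O(k\log(k/\eps))}$.

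The paper gives no proof of this statement; it imports it from the cited reference, so there is no in-paper argument to compare against. Your argument serves as a self-contained proof.

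One remark on your construction: the geometric grid is optional. A purely additive grid with step $\eps r_{\max}/k$ gives $O(k/\eps)$ values per coordinate. It yields the same $2^{O(k\log(k/\eps))}$ bound through the identical $\|\mathbf{1}\|\le k\|e_1\|$ charging. The geometric levels only shrink the per-coordinate alphabet to $O(\eps^{-1}\log(k/\eps))$. Either version is independent of the norm, and that is exactly what the paper's obliviousness claim relies on.
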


\begin{figure}[h!]
    \centering
    
    \includegraphics[width=0.7\textwidth]{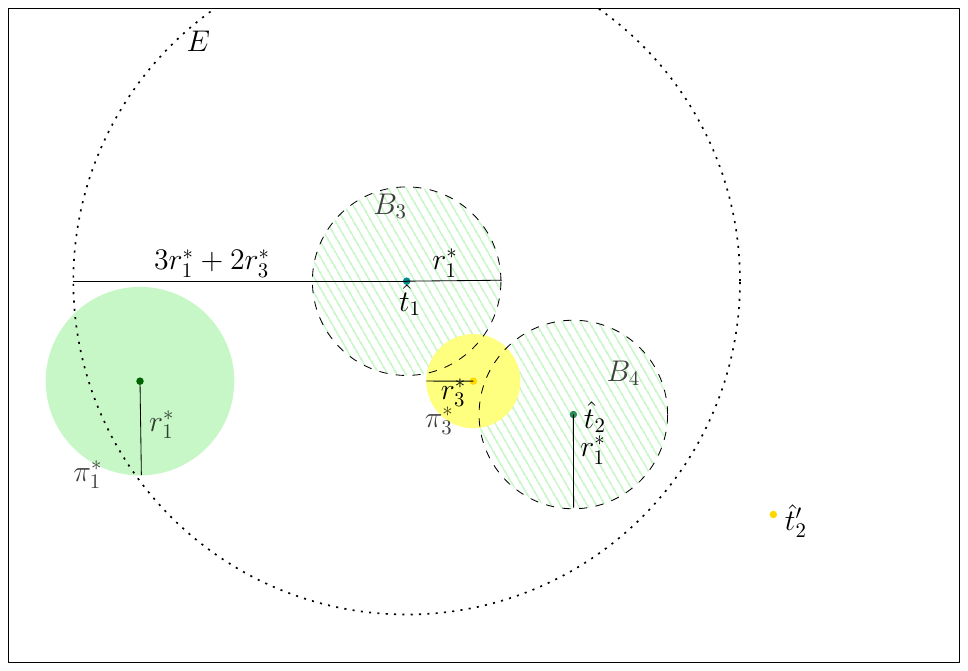}
    
    \vspace{0.5em} 
    
    \includegraphics[width=0.7\textwidth]{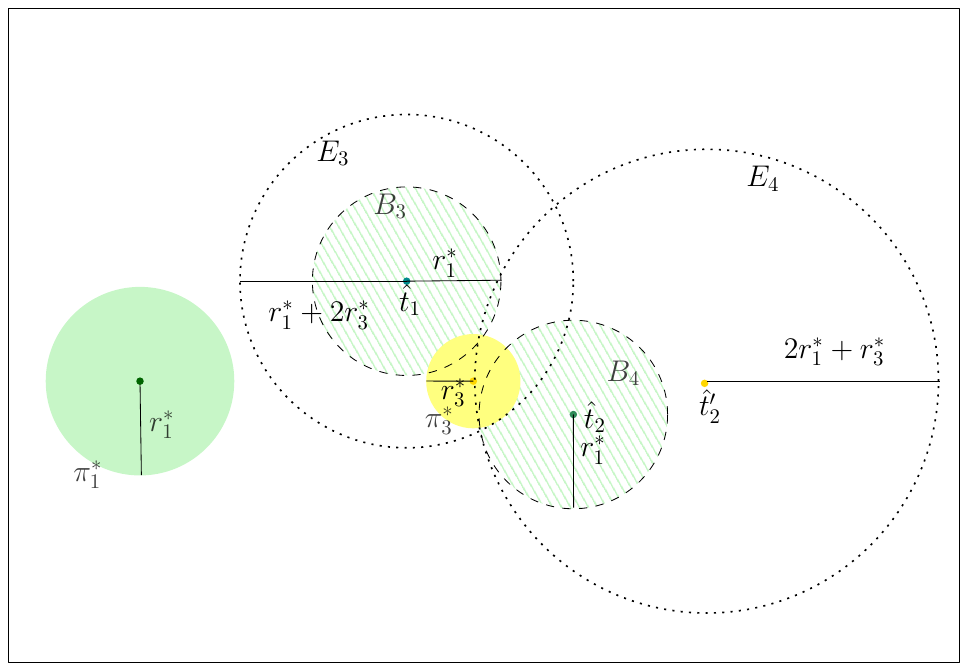}
    
    \caption{\small Illustration of the two-light-balls case for the example of~\Cref{fig:ballstri}. For clarity, we show only the relevant data.
Free points in $B_3$ and $B_4$ are indicated by striped areas.
\textbf{Top:} We open two balls: one centered at $\hat{t}_1$ with radius $3r^*_1+2r^*_3$, denoted by $E$, which contains enough points to charge all points in $\pi^*_1 \cup \pi^*_3$, and a zero-radius ball at some $\hat{t}'_2$ of yellow color to satisfy the color constraints.
\textbf{Bottom:} For monotone symmetric norms, we open balls $E_3$ and $E_4$ with radii $r^*_1+2r^*_3$ and $2r^*_1+r^*_3$, respectively. Ball $E_3$ is centered at $\hat{t}_1$, while the center of $E_4$ lies within distance $r^*_1+r^*_3$ from $\hat{t}_2$. Together, these balls contain enough points and satisfy the color constraints.}
    
    \label{fig:twolightballs}
\end{figure}

\paragraph*{Outline.}
To prove~\Cref{thm:mainthemnorm}, we first present an approximation-preserving \fpt-reduction from \fairsoroutsc\ to \colksuposc\ in~\Cref{sec:fairtocol}.  
Next, in~\Cref{sec:mainsec}, we develop a $(3+\eps)$-approximation algorithm for \colksuposc\ running in \fpt\ time parameterized by $k$, completing the proof of~\Cref{thm:mainthemnorm}.  
Finally, in~\Cref{sec:extn}, we show how our approach extends to the more general fair-range setting.

\section{From \fairsorout to \colksupo}\label{sec:fairtocol}

In this section, we  show the following  approximation preserving reduction  for \fpt\ algorithms from \fairsoroutsc to \colksuposc. 
\begin{theorem}\label{thm:redfc}
	There exists a randomized polynomial-time algorithm that maps an instance $\cI=\foins$  of \fairsoroutsc to an instance  $\cJ = \coins$ of \colksuposc with the following property: with probability at least $\frac{1}{2^{\Theta(k)}}$ over the randomness of the reduction, every $\alpha$-approximate solution for $\cJ$ can be converted in polynomial time into an $\alpha$-approximate solution for $\cI$, for any $\alpha\geq 1$. Furthermore, the reduction can be derandomized in time $2^{O(k)}\poly(|X|)$.
\end{theorem}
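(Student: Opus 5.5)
The reduction has two stages, following the overview in \Cref{ss:techov:redcol}. The first stage is a deterministic, cost-preserving map to a supplier instance with unit group requirements. The second stage is a color-coding step that merges the groups into exactly $k$ color classes.

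\textbf{Stage 1 (unit requirements).} Given $\cI=\foins$, set $C=X$. For each group $X_i$, create $\min(k_i,k)$ copies $X_i^{(1)},\dots,X_i^{(\min(k_i,k))}$ of its points, and let $F$ be the disjoint union of all copies. The pseudometric $d'$ is obtained from $d$, with a copy at distance $0$ from its original; the Remark allows this. Each copy $X_i^{(a)}$ becomes a facility group with requirement at most one center.

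I will check that feasible solutions correspond with the same radii. If $(S,R)$ is feasible for $\cI$, then $|S\cap X_i|\le k_i$. So we can place the centers of $S\cap X_i$ into distinct copies of $X_i$, and the balls cover exactly the same clients. Conversely, a solution choosing at most one facility per copy maps back to original points. Two centers may land on the same original point (copies of the same $x$ in different copies of $X_i$). In that case I keep one, merge the radii by taking the maximum, and replace the extra center with an arbitrary unused point of a group with slack, giving it radius $0$. Such a group exists because $\sum k_i\ge k$ and $|X|\ge k$ is assumed. By monotonicity of the norm, the cost does not increase. This also keeps at most $k_i$ centers from $X_i$. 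Hence optimal costs coincide, and approximate solutions transfer.

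\textbf{Stage 2 (color coding).} Let $G_1,\dots,G_m$ be the unit groups. Choose a uniformly random coloring $c:[m]\to[k]$ and let $F_j=\bigcup_{c(g)=j}G_g$. Fix an optimal solution of the unit instance. It uses exactly $k$ centers in $k$ distinct groups. With probability $k!/k^k\ge e^{-k}$ these groups receive distinct colors, and then the solution is feasible for the colorful instance $\cJ$, so $\opt(\cJ)\le\opt(\cI)$.

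Conversely, every feasible solution of $\cJ$ picks exactly one facility per color class. Since each group lies inside a single color class, it picks at most one facility per group. So it is feasible for the unit instance, and by Stage 1 it converts to a solution of $\cI$ of no larger cost. This direction holds deterministically, and it gives $\opt(\cI)\le\opt(\cJ)$. Together with the previous paragraph, whenever the coloring is good, an $\alpha$-approximate solution of $\cJ$ has cost at most $\alpha\,\opt(\cJ)\le\alpha\,\opt(\cI)$.

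\textbf{Derandomization and the main obstacle.} To derandomize, I replace the random coloring by a $(m,k)$-perfect hash family of size $e^{k}k^{O(\log k)}\log m$, constructed in time $2^{O(k)}\poly(m)$ (Naor--Schulman--Srinivasan). Some function in the family is injective on the $k$ optimal groups, and $m\le k t\le k|X|$, which gives the claimed $2^{O(k)}\poly(|X|)$ bound.

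The subtle step, which I expect to be the main obstacle, is the back-conversion in Stage 1 when two chosen facilities are copies of the same original point. The resulting multiset of centers must still have size $k$ and respect every $k_i$. I will handle it with the merge-and-pad argument above, or avoid it entirely by requiring copies within a group to be selected from distinct original points. Cap $k_i$ at $\min(k_i,|X_i|)$ to ensure padding is possible.
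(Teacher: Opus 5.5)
Your proposal is correct and follows the paper's proof: Lemma~\ref{lem:redfu} duplicates each group into unit-requirement copies with $C=X$; Lemma~\ref{lem:reduc} color-codes the unit groups into $k$ classes, succeeding with probability $k!/k^k$; and the randomness is removed with perfect hash families. Your merge-and-pad handling of two selected copies of the same original point is more careful than the paper's ``removing duplicates'' step, which can leave fewer than $k$ centers. Note, though, that $\sum_i k_i\ge k$ and $|X|\ge k$ alone do not guarantee a padding point exists; it is your cap at $\min(k_i,|X_i|)$ that does.
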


	
	

We prove~\Cref{thm:redfc} in three steps. First, in~\Cref{ss:redfu}, we show a polynomial time reduction from \fairsoroutsc to the supplier version, \fairksuposc, that has all group upper bounds equal to one, i.e., $k_i = 1$, while preserving the feasible solutions and their costs. 
Then, in~\Cref{ss:reduc}, we show a reduction from this problem to \colksuposc that preserves the optimal cost. Finally, we combine these two reductions to finish the proof of~\Cref{thm:redfc}, which is deferred to~\Cref{ss:redpf}.

\subsection{Reducing \fairsoroutsc to unit group supplier}\label{ss:redfu}
We show the following reduction.
\begin{lemma}\label{lem:redfu}
	There exists a polynomial time reduction that given an instance $\cI$ of \fairsoroutsc, constructs an instance $\cL$ of \fairksuposc with unit group requirement such that for any feasible solution $(S,R)$ to $\cI$, there exists a set $(T,R)$ of facilities in $\cL$ such that $(T,R)$ is a feasible solution to $\cL$ and $\cost_{\cL}((T,R)) = \cost_{\cI}((S,R))$, and vice versa. In particular, the optimal cost of $\cL$ is equal to that of $\cI$.
\end{lemma}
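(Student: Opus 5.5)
We construct the supplier instance explicitly and then verify both directions of the correspondence. Set $C = X$. For each group $X_i \in \groups$, create $k'_i := \min(k_i, k)$ copies $X_i^{(1)},\dots,X_i^{(k'_i)}$ of $X_i$. Each copy is a set of new facility points, and a copy $x^{(a)}$ of $x \in X_i$ is placed at the same location as $x$. Formally, we let $F = \bigcup_i \bigcup_{a \in [k'_i]} X_i^{(a)}$ and define the pseudometric $d'$ on $F\cup C$ by $d'(x^{(a)},y) = d(x,y)$ for $y \in C$, with the natural extension to all pairs. The groups of $\cL$ are the copies $F_{i,a} := X_i^{(a)}$, each with upper bound $1$, and $k,z$ are kept unchanged. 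Capping at $k$ keeps $|F| \le k|X|$, so the construction is polynomial. Groups with $k_i = 0$ get no copies, which correctly forbids them. The Remark on pseudometrics covers the zero distances between copies.

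\emph{Forward direction.} Let $(S,R)$ be feasible for $\cI$. For each $i$, the set $S\cap X_i$ has at most $\min(k_i,k) = k'_i$ elements; list them as $s_{i,1},\dots,s_{i,m_i}$ and map $s_{i,a} \mapsto s_{i,a}^{(a)} \in F_{i,a}$, keeping each radius attached to its center. The resulting $T$ has exactly $k$ distinct facilities and uses at most one facility per group $F_{i,a}$. Each ball $\ball(s_{i,a}^{(a)}, r)\cap C$ equals $\ball_X(s_{i,a},r)$ because distances are copied. Hence coverage, and therefore the outlier constraint, is preserved, and the radii multiset, and with it the cost (any norm of $R$), is unchanged.

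\emph{Reverse direction.} Let $(T,R)$ be feasible for $\cL$, and map each facility $x^{(a)}$ to its original point $x$. Two facilities of $T$ cannot map to the same $x$ with the same copy index, since each copy group contributes at most one facility. They could, however, be copies $x^{(a)}, x^{(b)}$ of the same $x$ with $a\ne b$. This collision is the only real obstacle, because $S$ must be a set of size $k$. I would resolve it by merging: replace the colliding balls with a single ball at $x$ of radius equal to the maximum of their radii. This covers the same points and does not increase any monotone norm. Then fill the freed slots with zero-radius balls at unused points of groups whose quota is not yet met. Such points exist provided the assumption $\sum_i k_i \ge k$ ensures enough available slots; if $|X_i| < k_i$ this needs care, so I would in fact also cap $k'_i$ at $|X_i|$.

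Merging changes the radii multiset, so the reverse direction gives cost at most that of $(T,R)$ rather than equal cost. This is enough to conclude that the optimal costs coincide. If exact equality of the radii multiset is needed for the "vice versa" statement, I would instead rule out collisions directly: whenever two copies of the same $x$ both appear, one of the two balls is dominated, and the dominated one can be relocated to any unused facility with the same radius. The count $|S\cap X_i| \le k'_i \le k_i$ then follows because facilities from $X_i$ lie in only $k'_i$ unit groups.
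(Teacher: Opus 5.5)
Your construction and forward direction match the paper's (it makes $k_i$ copies of each group where you make $\min(k_i,k,|X_i|)$), and your reverse direction uses the same idea of mapping copies back to their originals, so the proposal is correct. Your handling of collisions is more careful than the paper's, which just says ``removing duplicates'' even though this can leave $|S|<k$. Relocating the dominated ball preserves coverage and the radii multiset, and your cap $k'_i\le|X_i|$ is genuinely needed: with $X_1=\{x\}$, $X_2=\{y\}$, $k_1=k=2$, $k_2=0$, the uncapped $\cL$ is feasible but $\cI$ is not. With the cap, a relocation target always exists inside $X_i$ itself, because at most $k'_i\le|X_i|$ facilities of $T$ are copies of points of $X_i$, so you do not need to appeal to $\sum_i k_i\ge k$.
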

\begin{proof}
	The reduction proceeds simply by duplicating each group $X_i$ in $\cI$ $k_i$ many times and creating appropriate copies of each group. In more detail, let $\cI=\foins$ be the given instance of \fairsoroutsc. We construct an instance $\cL=\fosupins$ of \fairksuposc with unit group requirements as follows. 
	
	\noindent\textbf{Facility set $F$.}
	For every group $X_i \in \cX$ and every point $x \in X_i$, create $k_i$ distinct copies 
	\[
	(x,i,1),\dots, (x,i,k_i)
	\]
	and let $F= \{(x,i,j)\mid x \in X_i, j \in [k_i]\}$.
	
	\noindent\textbf{Facility groups.}
	For each $i \in [t]$ and $j \in k_i$, define group $F_{i,j} := \{(x,i,j) \mid x \in X_i\}$ with requirement $1$. Note that $\{F_{i,j}\}$ is a partition of $F$.
	
	\noindent\textbf{Client set $C$.} Set $C:=X$
	
	\noindent\textbf{Distance $d'$.} Naturally extend $d$ to obtain $d'$ over $F \cup C$. Formally, for $u,v \in C \cup F$,
	\[
	d'(u,v) = \begin{cases}
		d(u,v) & \text{if } u,v \in X\\
		d(u,x) & \text{if } u \in C, v=(x,i,j) \in F\\
		d(x,y) & \text{if } u=(x,i,j) \in F, v=(y,i',j') \in F \text{ and } x \neq y.
	\end{cases}
	\]

	Note that $|F| = \sum_{i \in [t]} |X_i|\cdot k_i \le |X|^2$, since $k_i \le |X|$, for $i \in [t]$, and $\ell = \sum_{i \in [t]} k_i \le |X|t$. Thus $|F \cup C| = O(|X|^2)$. 
	
	\textbf{Mapping between the feasible solutions.}
	Next we define mappings to obtain a feasible solution $(T,R)$ to $\cL$ given a feasible solution $(S,R)$ to $\cI$, and vice versa. First consider a feasible solution $(S,R)$ to $\cI$. For each $X_i \in \cX$ and for each $x \in S \cap X_i$, choose a distinct copy $(x,i,j)$ from $F_{i,j}$ and add it to $T$. Since $(S,R)$
	is feasible to $\cI$, we have that $(T,R)$ is feasible to $\cL$.  Conversely, consider a feasible solution $(T,R)$ to $\cL$.  For each selected facility copy $(x,i,j) \in T$, choose its original point $x \in X$ and add it to $S$ (removing duplicates). Since each group $F_{i,j}$ contributes to at most one facility in $T$, we have that $|S \cap X_i| \le k_i$, since there are $k_i$ many groups $F_{i,j}$ corresponding to $X_i$. Therefore, $(S,R)$ is a feasible solution to $\cL$.
	For a feasible solution $(S,R)$ to $\cI$, we say the feasible solution $(T,R)$ to $\cL$ obtained using this mapping as the \emph{image of $(S,R)$ in $\cL$}, and vice versa (by breaking ties arbitrary among the duplicate copies).
	
	Next, consider   solutions $(S,R)$ to $\cI$ and  $(T,R)$ in $\cL$ such that they are images of each other, and let $(x,i,j) \in T$ and $x\in S$.
	We have a simple observation that for any $p \in C$, we have  $d'(p,(x,i,j)) = d(p,x)$. Therefore, $\ball_C((x,i,j),r)=\ball_X((x,i,j),r)=\ball_X(x,r)$. Hence, the number of points covered by $(S,R)$ in $X$ is same as that covered by $(T,R)$ in $C$, i.e., $\cup_{((x,i,j),r) \in (T,R)}  \ball_C((x,i,j),r) =\cup_{(x,r)\in (S,R)} \ball_X(x,r)$. Thus, $(S,R)$ is a feasible solution to $\cI$ if and only if $(T,R)$ is a feasible solution to $\cL$. Finally, note that $\cost_\cI((S,R)) = \cost_{\cL}((T,R))$, as desired.

\end{proof}

\subsection{Reducing unit group supplier to \colksuposc}\label{ss:reduc}
Next we reduce the number of groups in the supplier setting to $k$.

\begin{lemma}\label{lem:reduc}
	There exists a randomized polynomial time reduction that given an instance $\cL=\fosupins$ of \fairksuposc and with unit group requirements, constructs an instance $\cJ=\coinsp$ of \colkcenosc such that $(i)$ a feasible solution to $\cJ$ is a feasible solution to $\cL$, and $(ii)$    with probability $\frac{1}{2^{\Theta(k)}}$ the optimal cost of $\cJ$ is same as that of $\cL$. Furthermore, the reduction can be derandomized in time $2^{O(k)}\poly(|\cL|)$.
\end{lemma}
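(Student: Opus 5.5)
The reduction is color coding on the $\ell$ unit groups of $\cL$. Pick a uniformly random coloring $h:[\ell]\to[k]$ and, for each color $c\in[k]$, define the merged class $F'_c := \bigcup_{i : h(i)=c} F_i$. Keep $F$, $C$, $d'$, $k$ and $z$ unchanged, and output $\cJ=(\mets,k,z,\cF'=\{F'_c\}_{c\in[k]})$. If some class $F'_c$ is empty, $\cJ$ is simply infeasible for this coloring. This does not affect the probabilistic claim, since the good event below forces every class to be nonempty. The construction clearly runs in polynomial time.

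For $(i)$, take a feasible solution $(S,R)$ to $\cJ$. It picks exactly one facility from each $F'_c$ and covers at least $|C|-z$ clients. Each original group $F_i$ lies inside a single class $F'_{h(i)}$, so $|S\cap F_i|\le |S\cap F'_{h(i)}| = 1$. Hence the unit upper bounds of $\cL$ hold, $|S|=k$, and the coverage condition is identical because the metric and client set are unchanged. So $(S,R)$ is feasible for $\cL$ with the same cost; this holds for every coloring and any objective. In particular $\opt(\cJ)\ge\opt(\cL)$ always.

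For $(ii)$, fix an optimal solution $(S^*,R^*)$ of $\cL$. Its $k$ centers lie in $k$ distinct groups $F_{i_1},\dots,F_{i_k}$, by the unit requirements and $|S^*|=k$; we may treat the $k$ selected copies as distinct facilities, since duplicates give a pseudometric. Let $E$ be the event that $h$ is injective on $\{i_1,\dots,i_k\}$. This event has probability $k!/k^k\ge e^{-k}=2^{-\Theta(k)}$. On $E$, each class $F'_c$ contains exactly one center of $S^*$, so $(S^*,R^*)$ is feasible for $\cJ$ and $\opt(\cJ)\le\opt(\cL)$. Together with the reverse inequality this gives equality.

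The only delicate part is the derandomization, and it is standard. Replace the random $h$ by an $(\ell,k)$-perfect hash family $\mathcal{H}$ of functions $[\ell]\to[k]$: for every $k$-subset of $[\ell]$, some $h\in\mathcal{H}$ is injective on it. The Naor–Schulman–Srinivasan construction gives such a family of size $e^{k}k^{O(\log k)}\log \ell = 2^{O(k)}\log\ell$, computable in time $2^{O(k)}\poly(\ell)$. Since $\ell\le|F|\le|\cL|$, enumerating $\mathcal{H}$ produces $2^{O(k)}\log|\cL|$ instances in time $2^{O(k)}\poly(|\cL|)$. Each instance satisfies $(i)$, and at least one (the function injective on $\{i_1,\dots,i_k\}$) satisfies $\opt(\cJ)=\opt(\cL)$. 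The extra $\log \ell$ factor is harmless, being absorbed into $\poly(|\cL|)$. The argument uses only feasibility and not the form of the objective, so it applies verbatim to any monotone symmetric norm of the radii.
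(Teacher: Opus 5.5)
Your proposal is correct and follows essentially the same route as the paper. You color-code the $\ell$ unit groups into $k$ merged classes, obtain (i) from $|S\cap F_i|\le |S\cap F'_{h(i)}|=1$, obtain (ii) from the colorful event of probability $k!/k^k$, and derandomize with a perfect hash family; you are just more explicit than the paper about empty classes and the Naor--Schulman--Srinivasan bound.
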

\begin{proof}
	The reduction proceeds by color-coding the groups of $\cL$ using $k$ colors. More specifically, $\cJ$ is exactly the same as $\cL$, except the groups. The group set $\cF'$ in $\cL$ are constructed as follows:    assign each group in $\cF$ a uniformly random color from $[k]$. Then, let $F'_i \in \cF'$ be the union of groups of $\cF$ that receive color $i \in [k]$. This completes the construction of $\cJ$.
	
	First note that any feasible solution to $\cJ$ is also a feasible solution to $\cL$ since each group in $\cJ$ is a collection of groups of $\cL$ with unit group requirement, thereby implying $(i)$. This further implies that $\opt(\cL) \le \opt(\cJ)$, finishing one direction of $(ii)$. For the other direction, fix an optimal solution $(O_L,R^*)$  to $\cL$, and let $\hat{\cF} \subseteq \cF$ denote the set of groups in $\cL$ that have an element in $O_L$. We say $\hat{\cF}$ is \emph{colorful} in $\cJ$ if every element in $\hat{\cF}$ received a distinct color in the above mentioned random process. Then,
	\[
	\Pr[\hat{\cF} \text{ is colorful in } \cJ] =\frac{k!}{k^k} = 2^{-\Theta(k)}.
	\]
	Now conditioning on the event that $\hat{\cF}$ is colorful in $\cJ$, we have that $(O_L,R^*)$ is a feasible solution to $\cJ$. Moreover, in this case,
	\[
	\opt(\cJ) \le \cost_{\cJ}((O_L,R^*)) = \cost_{\cL}((O_L,R^*)) = \opt(\cL),
	\]
	where the first equality follows since the metric in $\cJ$ is the same as in $\cL$. Therefore, with probability $1/2^{\Theta(k)}$,    it holds that $\opt(\cJ)=\opt(\cL)$, finishing the proof.    
\end{proof}

\subsection{Proof of~\Cref{thm:redfc}}\label{ss:redpf}
The proof follows by chaining the reduction of~\Cref{lem:redfu} with that of~\Cref{lem:reduc}. Given an instance $\cI=\foins$ of \fairsoroutsc, let $\cL=\fosupins$ be an instance of \fairksuposc obtained from~\Cref{lem:redfu} on $\cI$.
Furthermore, let $\cJ=\coinsp$ be the instance of \colksuposc obtained from~\Cref{lem:reduc} on $\cL$ conditioned on the event that item (ii) of~\Cref{lem:reduc} holds. Consider an $\alpha$-approximate solution $T$ to $\cJ$. Then, we have that $T$ is a feasible solution to $\cL$ from~\Cref{lem:reduc}.  We have,
\begin{align}
	\cost_{\cJ}(T) &\le \alpha \cdot \opt(\cJ)\nonumber\\
	&= \alpha \cdot\opt(\cL) \qquad \text{by item $(ii)$ of~\Cref{lem:reduc}}\nonumber\\
	&= \alpha \cdot \opt(\cI) \qquad\text{by~\Cref{lem:redfu}} \label{eq:costoft}
\end{align}
Now consider the image $S$ of $T$ in the instance $\cI$. From~\Cref{lem:redfu}, it follows that $S$ is a feasible solution to $\cI$. Moreover,
\begin{align*}
	\cost_{\cI}(S) = \cost_{\cL}(T)  = \cost_{\cJ}(T) \ \le \alpha \cdot \opt(\cI),
\end{align*}
where the first equality follows from~\Cref{lem:redfu}, the second equality follows from~\Cref{lem:reduc}, and the inequality follows from~\Cref{eq:costoft}. 

Finally, note that  the color-coding procedure can be derandomized in time $2^{O(k)} |X|\log |X|$ by using standard tools from parameterized complexity~\cite{CyganFKLMPPS15}.
\qed

\section{An FPT $3$-approximation for \colkceno}\label{sec:mainsec}

In this section, we formalize the iterative ball-finding framework outlined in~\Cref{ss:techov} and present our algorithm for the colorful SoR with outliers problem.
Recall that the colorful formulation enforces the selection of exactly one center from each color class.

\subsection{The Algorithm}
The pseudocode for our main algorithm is described in~\Cref{algo:colksupalgo}. Initially, all the optimal clusters are unmarked. Then, the algorithm calls the iterative ball-finding subroutine, described in~\Cref{algo:settle cluster}, for every unmarked optimal cluster.
As described in~\Cref{ss:techov}, this subroutine builds a collection of dense balls of the corresponding optimal cluster radius over the remaining points and examines their geometric and combinatorial structure.
The structural trichotomy guarantees that among these balls, one can either directly recover an optimal cluster or identify a small configuration whose expansion settles at least one cluster up to a constant-factor loss.

\begin{algorithm}[h]
	\caption{A $3$ \fpt\ approximation for \colksuposc} \label{algo:colksupalgo}
	\KwData{Instance $\cI=\coins$ of \colksuposc}
	\KwResult{A $3$ approximate clustering to $\cI$}
	
	$C' \leftarrow C$; 
	$\cE \leftarrow \emptyset$\;
	\texttt{// Let $\Pi^* = \{\pi_1,\dots,\pi_k\}$ be an optimal clustering to $\cI$\;}
	Unmark all the clusters in $\Pi^*$\;
	\While{$\exists$ unmarked cluster in $\Pi^*$}
	{
		$j \in [k] \leftarrow $ \guess the index of the densest unmarked optimal cluster in $C'$\label{algo:lin:den}\;
		$\cE \leftarrow \cE \cup \SC(j,F,C')$\;
	}
	\Return $\cE$\ if it covers $n-z$ points,  \textbf{else fail};
	
		
		
\end{algorithm}

\begin{algorithm}
	\caption{\SC$(j,F,C')$} \label{algo:settle cluster}
	$F'' \gets F, F''_j\gets F_j, C'' \gets C'$\;
	$\cB \gets \emptyset$\;
	\guess $r_j \gets \epsilon$-approximation of $r^*_j$\;
	\While{$\exists$ a facility in $F''_j$}
	{
		pick the densest ball $B:=\ball_{C''}(\hat{t},r_j)$ such that $\hat{t} \in F''_j$\;
		$\cB\gets \cB \cup \{B\}$\;
		$C'' \gets C''\setminus B$\;
		$F'' \gets F'' \setminus \hat{t}$\;
		\lIf{$|\cB|=4k$}{break;\qquad \texttt{// enough balls for the structural trichotomy}}
	}
	
	\guess if $\cB$ contains a ball $\hat{B}$ centered at $\hat{t} \in F_j$ that is either a \emph{nearby ball} for $\pi^*_j$ or is a \emph{good ball} for $\pi^*_j$ in $C'$\label{algo:sc:line:goodcase}\;
	\If{the \guess is successful}{
		mark $\pi^*_j$\label{algo:sc:lin:markj}\;
		$C' \gets C' \setminus \ball_{C'}(\hat{t},3r_j)$\;
		\Return $(\hat{t},3r_j)$\;}
	\Else{

		\guess if $\exists \hat{B}_1 \neq \hat{B}_2 \in \cB$ centered at $\hat{t}_1, \hat{t}_2 \in F_j$, resp. and an index $\ell \in [k]$ of unmarked optimal cluster such that both $\hat{B}_1$ and $\hat{B}_2$ are light balls for the current iteration, and are also nearby balls for $\pi^*_\ell$\label{algo:sc:lin:badcase}\;
		\lIf{the \guess is unsuccessful}{\textbf{fail}}
		\Else{
			\guess $r_\ell \gets \epsilon$-approximation of $r^*_\ell$\;        
			let $\hat{t}'_2 \in F_\ell$ be such that $d'(\hat{t}'_2,\hat{t}_2) \leq r_j+r_\ell$\tcp*{guaranteed by~\Cref{lem:charging lemma}}
			mark $\pi^*_j,\pi^*_\ell$\label{algo:lin:markell}\;
			$C' \gets C' \setminus (\ball_{C'}(\hat{t}_1,r_j+2r_\ell) \cup \ball_{C'}(\hat{t}'_2,2r_j+r_\ell))$\;
			\Return $(\hat{t}_1,r_j+2r_\ell) \cup (\hat{t}'_2,2r_j+r_\ell)$
		}
	}
	
\end{algorithm}

Let $\cI =\coins$ be an instance of \colkcenosc. Let $(O^*=\{o^*_1,\dots, o^*_k\}, R^*=\{r^*_1,\dots,r^*_k\})$ be an optimal solution to $\cI$ with cost $\opt:= ||R^*||$, and let $\Pi^* = \{\pi^*_1,\dots,\pi^*_k\}$ be the corresponding clustering of $C$. Let $Z^* \subseteq C$ be the set of outliers for $\Pi^*$. Without loss of generality, for example by relabeling centers in $O^*$, we assume that $o^*_i \in F_i$, for $i \in [k]$.
Finally, let $R=\{r_1,\dots,r_k\}$ be an $\epsilon$-approximation of $R^*$, obtained from~\Cref{thm:epsapxrad}.


\subsection{Analysis}\label{ss:analysis}
For the analysis, we make the following assumption on the guesses. 
All guesses are implemented via bounded branching; see~\Cref{ss:runtime}.
\begin{assumption}\label{assm:guess}
	The \guess keyword always returns the correct answer, if there exists. Otherwise, it returns \emph{unsuccessful}.
\end{assumption}

For analyzing the algorithms, we assume that~\Cref{assm:guess} holds. We start with setting up the notations required for the analysis.

\paragraph*{Notations.}
Consider an iteration $i \ge 1$ of the \textbf{while}-loop of~\Cref{algo:colksupalgo}.
\begin{itemize}
	\item Let $\Pi^*_{u,i}$ denote the unmarked optimal clusters at the beginning of iteration $i$.
	\item Let $C'_i$ denote the remaining clients at the beginning of iteration $i$, and let and $C'_{u,i} \subseteq C'_i$ denote the points of unmarked optimal clusters in $C'_i$ at the beginning of iteration $i$. Also, let $C'_{f,i}:=C'_i \setminus C'_{u,i}$, which we call \emph{free points} in iteration $i$. 
	\item Let $j_i \in [k]$ be the index of densest unmarked optimal cluster in $C'_i$ (picked by~\Cref{algo:colksupalgo} in line~\ref{algo:lin:den}).
\end{itemize}

Next, we need the following definition that is used by~\Cref{algo:settle cluster}.

\begin{definition}[nearby, good, and light ball]\label{def:allballs}
	Consider an iteration $i \geq 1$. 
	For $C''_i \subseteq C'_i$, a ball $B=\ball_{C''_i}(\hat{t},r_{j_i}), \hat{t} \in F$ is
	\begin{enumerate}
		\item[(i)] a \emph{nearby ball} to $\pi^*_{j_i}$ if $d(\hat{t},o^*_{j_i}) \le 2\cdot r_{j_i}$ and $\hat{t} \in F_{j_i}$.
		\item[(ii)] a \emph{good ball} for $\pi^*_{j_i}$ in $C'_i$ if $\hat{t} \in F_{j_i}$ and the number of free points in ${B}$ is at least the number of points of $\pi^*_{j_i}$  in $C'_i$, i.e., $|{B} \cap C'_{f,i}| \geq |\pi^*_{j_i} \cap C'_{i}|$.
		\item[(iii)] a \emph{dense ball} for iteration $i$ if ${B}$ contains strictly more than ${B}/2$ points from $C'_{u,i}$, i.e.,  $|{B} \cap C'_{u,i}| > |{B}|/2$; otherwise ${B}$ is a \emph{light ball} for iteration $i$, in which case $|{B} \cap C'_{u,i}| \le |{B}|/2$, or equivalently,  $|{B} \cap C'_{f,i}| \geq |{B}|/2$, i.e., at least half of its points are free.
	\end{enumerate}
\end{definition}


See~\Cref{fig:ballstri} for an illustration.
Next, we make the following simple observation.

\begin{observation}\label{obs:Bnotempty}
	Consider an iteration $i \ge 1$ of the \textbf{while}-loop of~\Cref{algo:colksupalgo}. First, the \guess of~\Cref{algo:colksupalgo} (in line~\ref{algo:lin:den}) is always successful.
	Second,  the set $\cB$ of balls constructed by~\Cref{algo:settle cluster} for iteration $i$ is not empty.
\end{observation}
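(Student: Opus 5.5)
The argument is a short direct check of the two claims, using the loop condition of \Cref{algo:colksupalgo} and the fact that each optimal center lies in its own color class.

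\textbf{First claim.} Iteration $i$ is entered only because the \textbf{while}-condition holds, so $\Pi^*_{u,i}$ is nonempty. Among its finitely many clusters, the quantity $|\pi^*_\ell \cap C'_i|$ attains a maximum. Choose any maximizing index $\ell$ (ties broken arbitrarily); this is a valid answer for the \guess in line~\ref{algo:lin:den}. By \Cref{assm:guess}, the \guess then returns such an index $j_i$ and is not unsuccessful. This holds even if every unmarked cluster has no remaining points in $C'_i$, since a maximum over a nonempty finite set always exists.

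\textbf{Second claim.} By the assumption $o^*_{j} \in F_{j}$ for every $j$, the color class $F_{j_i}$ contains $o^*_{j_i}$ and so is nonempty. When \SC$(j_i,F,C')$ is called, it initializes $F''_{j} \gets F_{j}$. Hence the \textbf{while}-condition ``$\exists$ a facility in $F''_j$'' holds at its first test.

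In that first pass, the densest ball $\ball_{C''}(\hat t, r_{j_i})$ over $\hat t \in F''_{j_i}$ is well defined. It is a maximum over a nonempty finite set, and the ball may itself be empty as a set of clients. This ball is added to $\cB$. Therefore $|\cB| \ge 1$ when the loop exits, whether it exits by the break at $4k$ or by exhausting $F''_{j_i}$.

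\textbf{Remaining detail.} The guessed radius $r_{j_i}$ must exist. By \Cref{assm:guess} together with \Cref{thm:epsapxrad}, the $\eps$-approximation vector $R$ provides $r_{j_i}$, and this is exactly what \SC reads off.

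I expect no real obstacle. The only subtle point is being careful that ``$\cB$ nonempty'' means the collection of balls is nonempty, not that each ball contains clients. The argument must therefore rely only on $F_{j_i} \ne \emptyset$, which holds because $o^*_{j_i} \in F_{j_i}$, and not on any remaining clients.
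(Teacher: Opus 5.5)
Your proposal is correct and follows the paper's proof. The first claim comes directly from the \textbf{while}-condition of the main algorithm, and the second from $o^*_{j_i}\in F_{j_i}$ together with the fact that facilities are only removed from the local copy $F''$ (reinitialized from $F$ at each call), so the subroutine's loop runs at least once and adds a ball to $\cB$; your remarks on the existence of a maximizer and of $r_{j_i}$ are harmless elaborations the paper leaves implicit.
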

\begin{proof}
	The first sentence follows due to the condition of the \textbf{while}-loop being satisfied. For the second sentence, consider $\pi^*_{j_i}$ and its center $o^*_{j_i}$. Since the algorithms never delete any facility from $F$, it holds that $o^*_{j_i} \in F_{j_i}$, and hence $\cB \neq \emptyset$. 
\end{proof}
The following key lemma provides structural properties of the ball set $\cB$ constructed by~\Cref{algo:settle cluster} that are crucial in obtaining the desired solution.
\begin{lemma}[Structural trichotomy]\label{lem:charging lemma}
	Consider an iteration $i \geq 1$ of the \textbf{while}-loop of~\Cref{algo:colksupalgo}. Let $\cB$ be the set of balls obtained by~\Cref{algo:settle cluster} at the end of its \textbf{while}-loop in iteration $i$. Then, at least one of the following is true.
	\begin{enumerate}
		\item[(i)] $\cB$ contains a good ball for $\pi^*_{j_i}$ in $C'_i$
		\item[(ii)] $\cB$ contains a nearby ball to $\pi^*_{j_i}$
		\item[(iii)] there exists $\hat{B}_1\neq \hat{B}_2 \in \cB$ and an unmarked optimal cluster $\pi^*_{\ell_i} \in \Pi^*_{u,i}$ such that
		\begin{itemize}
			\item both $\hat{B}_1, \hat{B}_2$ are light balls for iteration $i$,
			\item both $\hat{B}_1, \hat{B}_2$ are nearby balls for $\pi^*_{\ell_i}$, and
			\item $\hat{B}_1 \cup \hat{B}_2$ contain at least $|\pi^*_{j_i} \cap C'_i|$ free points, i.e., $|(\hat{B}_1 \cup \hat{B}_2)\cap C'_{f,i}| \ge |\pi^*_{j_i} \cap C'_i|$.
		\end{itemize}
	\end{enumerate}
\end{lemma}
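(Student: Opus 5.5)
We argue by contradiction: assume neither (i) nor (ii) holds and derive (iii). Write $j=j_i$, $m:=|\pi^*_j\cap C'_i|$, and let $\cB=\{B_1,\dots,B_s\}$ be the balls in the order they were picked, with centers $\hat t_1,\dots,\hat t_s\in F_j$. Every $B_a$ is a subset of $C'_i$, and the $B_a$ are pairwise disjoint because each ball is removed from $C''$ before the next one is chosen.

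\textbf{Step 1: the loop ran to $s=4k$ and every ball holds at least $m$ points.}
\begin{enumerate}
\item[(a)] If the loop stopped because $F''_j$ was exhausted, then $o^*_j$ was itself chosen as some $\hat t_a$. Since $d(o^*_j,o^*_j)=0\le 2r_j$, the ball $B_a$ is a nearby ball, contradicting the failure of (ii). Hence $s=4k$.
\item[(b)] Fix $a\le 4k$. No earlier center $\hat t_b$ is within $2r_j$ of $o^*_j$, since otherwise $B_b$ would be nearby. Each earlier $B_b\subseteq \ball(\hat t_b,r_j)$, so $B_b$ is disjoint from $\ball(o^*_j,r^*_j)\supseteq \pi^*_j$, using $r^*_j\le r_j$.
\item[(c)] Therefore $\pi^*_j\cap C'_i$ is still present in $C''$ when $B_a$ is chosen. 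Also $o^*_j$ is still available in $F''_j$, since it was never picked.
\item[(d)] By the densest choice, $|B_a|\ge |\ball_{C''}(o^*_j,r_j)|\ge m$.
\end{enumerate}

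\textbf{Step 2: at least $2k+1$ balls are light.}
\begin{enumerate}
\item[(a)] Since no ball is good, each $B_a$ has fewer than $m$ free points.
\item[(b)] Let $M:=|C'_{u,i}|$. Every unmarked cluster has at most $m$ remaining points, because $\pi^*_j$ is the densest. There are at most $k$ unmarked clusters, so $M\le km$.
\item[(c)] A dense ball has more than $|B_a|/2\ge m/2$ points from $C'_{u,i}$. By disjointness, the number of dense balls is less than $M/(m/2)\le 2k$. This needs $m\ge1$; if $m=0$ every ball is trivially good.
\item[(d)] Hence at least $2k+1$ of the $4k$ balls are light, with some slack in the constants.
\end{enumerate}

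\textbf{Step 3: pigeonhole onto one cluster, and the free-point count.}
\begin{enumerate}
\item[(a)] A light ball $B_a$ has at least $|B_a|/2\ge m/2$ free points. Since it is not good, it has fewer than $m$ free points, so $|B_a|<2m$ and therefore $|B_a\cap C'_{u,i}|>0$.
\item[(b)] So $B_a$ contains a point $p$ of some unmarked cluster $\pi^*_\ell$, giving $d(\hat t_a,o^*_\ell)\le r_j+r^*_\ell$.
\item[(c)] With at least $2k+1>k$ light balls and at most $k$ unmarked clusters, two distinct light balls $\hat B_1,\hat B_2$ hit the same $\pi^*_\ell$.
\item[(d)] Their free points total at least $m/2+m/2=m$ by disjointness, which is the third bullet of (iii).
\end{enumerate}

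\textbf{Main obstacle: the meaning of ``nearby ball for $\pi^*_\ell$''.} Definition~\ref{def:allballs} literally requires $d(\hat t,o^*_\ell)\le 2r_\ell$ and $\hat t\in F_\ell$. Neither condition holds here, since the centers lie in $F_j$. I would interpret ``nearby for $\pi^*_\ell$'' in (iii) as $d(\hat t,o^*_\ell)\le r_j+r_\ell$, consistent with the figure captions and with the algorithm's choice of $\hat t'_2$.

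Under that reading, Step 3(b) gives the needed distance bound directly, using $r^*_\ell\le r_\ell$. It also yields the existence of $\hat t'_2=o^*_\ell\in F_\ell$ with $d(\hat t'_2,\hat t_2)\le r_j+r_\ell$, which is what the algorithm uses. Settling this definitional point, and checking that the dense-ball counting leaves at least $k+1$ light balls, is where the care is needed.
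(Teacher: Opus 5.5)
Your proof follows the paper's argument step for step: the $|\cB|<4k$ case forces $o^*_{j_i}$ to be a center, every ball has at least $|\pi^*_{j_i}\cap C'_i|$ points, counting unmarked points gives fewer than $2k$ dense balls, and pigeonhole applies to the light non-good balls. Your reading of ``nearby for $\pi^*_{\ell_i}$'' as $d(\hat t,o^*_{\ell_i})\le r_{j_i}+r_{\ell_i}$ is how the paper treats it: its proof derives this from the ball meeting $\pi^*_{\ell_i}$, and its later lemmas use it that way. One small fix: in Step 3(a), $|B_a\cap C'_{u,i}|>0$ follows from $|B_a|\ge m$ (Step 1(d)) together with $B_a$ having fewer than $m$ free points, not from $|B_a|<2m$.
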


Intuitively, density forces each constructed ball to draw its mass either from free points or from remaining optimal clusters, and a  counting argument ensures that one of the above configurations must arise.
We defer the proof of the structural lemma to the end of this section.
Now we show the correctness of~\Cref{algo:colksupalgo} using the structural lemma. First, we show that~\Cref{algo:colksupalgo} terminates quickly and successfully (without failing).

\begin{lemma}\label{lem:iterofalgo}
	The \textbf{while}-loop of~\Cref{algo:colksupalgo} runs at most $k$ times.
\end{lemma}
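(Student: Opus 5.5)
The plan is to show that every iteration of the \textbf{while}-loop of~\Cref{algo:colksupalgo} marks at least one previously unmarked optimal cluster. Since $\Pi^*$ has exactly $k$ clusters and marks are never removed, the loop condition must then fail after at most $k$ iterations.

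Fix an iteration $i$. By~\Cref{obs:Bnotempty}, the guess in line~\ref{algo:lin:den} succeeds and returns the index $j_i$ of an unmarked cluster, and the set $\cB$ built by~\Cref{algo:settle cluster} is nonempty. Under~\Cref{assm:guess}, I then invoke~\Cref{lem:charging lemma} and split into its cases.

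\emph{Cases (i) and (ii).} $\cB$ contains a ball that is either good for $\pi^*_{j_i}$ in $C'_i$ or nearby to $\pi^*_{j_i}$. The guess in line~\ref{algo:sc:line:goodcase} therefore succeeds, and line~\ref{algo:sc:lin:markj} marks $\pi^*_{j_i}$, which was unmarked.

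\emph{Case (iii).} Suppose neither of the first two cases applies. Case (iii) provides two distinct light balls $\hat{B}_1,\hat{B}_2\in\cB$ and an unmarked $\pi^*_{\ell_i}$ for which both balls are nearby. Their centers lie in $F_{j_i}$, since~\Cref{algo:settle cluster} only picks centers from $F''_{j_i}\subseteq F_{j_i}$. So the guess in line~\ref{algo:sc:lin:badcase} succeeds.

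The one point needing care is the existence of $\hat{t}'_2\in F_{\ell_i}$ with $d'(\hat{t}'_2,\hat{t}_2)\le r_{j_i}+r_{\ell_i}$. Otherwise the algorithm could fail before reaching line~\ref{algo:lin:markell}. I would take $\hat{t}'_2=o^*_{\ell_i}\in F_{\ell_i}$, which is never deleted since the algorithms never remove facilities from $F$. The required distance bound then has to come from the nearby-ball property with respect to $\pi^*_{\ell_i}$.

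This is the main obstacle: checking that ``nearby for $\pi^*_{\ell_i}$'' in case (iii) carries the distance bound $d(\hat{t}_2,o^*_{\ell_i})\le r_{j_i}+r_{\ell_i}$. That bound comes from the ball of radius $r_{j_i}$ around $\hat{t}_2$ meeting $\pi^*_{\ell_i}$, which lies within $r^*_{\ell_i}\le r_{\ell_i}$ of $o^*_{\ell_i}$. It is not the colour condition $\hat{t}\in F_{\ell_i}$, which fails here since the center has colour $j_i$. I would rely on the proof of~\Cref{lem:charging lemma} to supply exactly this bound; the algorithm's comment cites that lemma for it.

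With the bound in hand, line~\ref{algo:lin:markell} marks both $\pi^*_{j_i}$ and $\pi^*_{\ell_i}$, at least one of them newly.

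Finally, since~\Cref{lem:charging lemma} guarantees that one of the three cases always holds, \SC\ never fails. Hence every iteration strictly increases the number of marked clusters, which gives the bound of $k$ iterations.
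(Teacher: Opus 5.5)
Your proposal is correct and follows the paper's proof essentially step for step. In the nearby/good case line~\ref{algo:sc:lin:markj} marks $\pi^*_{j_i}$, and in the two-light-balls case $o^*_{\ell_i}\in F_{\ell_i}$ serves as $\hat{t}'_2$ because $d(o^*_{\ell_i},\hat{t}_2)\le r^*_{\ell_i}+r_{j_i}\le r_{\ell_i}+r_{j_i}$; this bound comes from $\hat{B}_2$ meeting $\pi^*_{\ell_i}$, not from the literal colour condition in the definition of a nearby ball, exactly as you read it and as the proof of~\Cref{lem:charging lemma} supplies.
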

\begin{proof}
	First note that the \guess of~\Cref{algo:colksupalgo} is always successful due to~\Cref{obs:Bnotempty}.
	Now, we claim that in every iteration of the \textbf{while}-loop of~\Cref{algo:colksupalgo}, at least one unmarked optimal cluster is marked. Since there are $k$ unmarked optimal clusters in the beginning, this implies the lemma. For the claim, consider an iteration $i \geq 1$ of the \textbf{while}-loop, and consider the set $\cB$ of balls constructed by~\Cref{algo:settle cluster} in this iteration. Recall that $\pi^*_{j_i}$ is the unmarked optimal cluster considered by the algorithm in line~\ref{algo:lin:den}.
	Now, if $\cB$ contains a nearby ball to $\pi^*_{j_i}$ or a good ball for $\pi^*_{j_i}$ in $C'_i$, then the \guess in line~\ref{algo:sc:line:goodcase} is successful and~\Cref{algo:settle cluster} marks $\pi^*_{j_i}$ in line~\ref{algo:sc:lin:markj}, and hence we are done. Otherwise,~\Cref{lem:charging lemma} implies the existence of balls $\hat{B}_1, \hat{B_2}$ in $\cB$ such that the \guess in Line~\ref{algo:sc:lin:badcase} is successful. Furthermore, $o^*_\ell \in F_{\ell}$ is a candidate for $\hat{t}'_2$ since $d(o^*_\ell, \hat{t}_2) \le r^*_\ell + r_{j_i} \le  r_\ell + r_{j_i}$. Therefore,~\Cref{algo:settle cluster} marks $\pi^*_{j_i}$ and $\pi^*_\ell$ in line~\ref{algo:lin:markell}, finishing the proof of the claim.
\end{proof}

Given the above lemma, we make the following definition with respect to the execution of~\Cref{algo:settle cluster} for iteration $i \in [k]$. We say that iteration $i$ is in the \emph{easy case} if the \guess in line~\ref{algo:sc:line:goodcase} of~\Cref{algo:settle cluster} is successful, otherwise we say that it is in the \emph{hard case}.
Next, let $\cE= \cE_e \cup \cE_h$, where $\cE_e$ are the solutions corresponding to the easy case, and $\cE_h$ corresponds to the solutions to the hard case. Since~\Cref{algo:settle cluster} marks each optimal cluster exactly once, we have that $|\cE|=k$. Furthermore, since in the hard case, \Cref{algo:settle cluster} returns two solutions, we pair the solutions in $\cE_h$, and we say that the corresponding optimal clusters marked by the algorithm as \emph{settled} together.

\begin{lemma}\label{lem:feasE}
	The cost of the solution $\cE$, returned by~\Cref{algo:colksupalgo}, is at most $(3+\eps) \opt$. Furthermore, $\cE$ satisfies the fairness constraints of $\cI$.
\end{lemma}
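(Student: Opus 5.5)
The lemma has two parts, fairness and cost, and I would prove them in that order: fairness is bookkeeping about colors, while the cost bound needs an explicit dominance argument for an arbitrary monotone symmetric norm.

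\textbf{Fairness.} By \Cref{lem:iterofalgo} and the marking rules, every optimal cluster $\pi^*_\ell$ is marked exactly once. Each marking event contributes exactly one center of the marked cluster's color.
\begin{itemize}
\item In the easy case, $\pi^*_{j_i}$ is marked and the returned center $\hat t$ lies in $F_{j_i}$.
\item In the hard case, $\pi^*_{j_i}$ and $\pi^*_{\ell}$ are marked. The returned centers are $\hat t_1\in F_{j_i}$ and $\hat t'_2\in F_\ell$.
\end{itemize}
Since the indices $1,\dots,k$ of marked clusters are distinct and each is marked once, $\cE$ contains exactly one center from each $F_i$. The existence of $\hat t'_2$ is guaranteed in the proof of \Cref{lem:iterofalgo}, with $o^*_\ell$ as a witness. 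Selected centers are distinct because they come from distinct color classes. So $|S\cap F_i|=1$ for all $i$.

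\textbf{Cost.} I would assign each returned radius to the cluster(s) marked in the same event and build a vector $\rho=(\rho_1,\dots,\rho_k)$ indexed by optimal clusters.
\begin{itemize}
\item Easy case: $\rho_{j}=3r_{j}$.
\item Hard case: $\rho_{j}=r_j+2r_\ell$ and $\rho_\ell=2r_j+r_\ell$.
\end{itemize}
The radii of $\cE$ form exactly the vector $\rho$ up to permutation, so by symmetry $\|\cE\|=\|\rho\|$. In the hard case the pair $(r_j+2r_\ell,\,2r_j+r_\ell)$ equals $(r_j,r_\ell)+2(r_\ell,r_j)$. Hence, coordinatewise,
\[
\rho = R' + 2R'',
\]
where $R'=(r_1,\dots,r_k)$, and $R''$ is $R'$ with the coordinates of each hard-case pair swapped and all other coordinates doubled-free, i.e., equal to $r_j$ in easy coordinates. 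More precisely, in easy coordinates $\rho_j=3r_j=r_j+2r_j$, so $R''$ agrees with $R'$ there. In every case $R''$ is a permutation of $R'$.

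By the triangle inequality and symmetry,
\[
\|\rho\|\le \|R'\|+2\|R''\|=3\|R'\|\le 3(1+\eps)\opt.
\]
The last step uses that $R$ is an $\eps$-approximation of $R^*$ (\Cref{thm:epsapxrad}), together with the assumption that the per-cluster guesses $r_j$ coincide with the entries of that fixed approximation vector. Rescaling $\eps$ gives $(3+\eps)\opt$. Monotonicity is needed only to justify using the guessed radii $r_j\ge r^*_j$ in place of $R^*$ consistently.

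\textbf{Main obstacle.} The key step is recognizing that the hard-case radii decompose as the original vector plus twice a permutation of it, rather than bounding the pair by $3r_j+2r_\ell$. The latter bound would only work for the sum objective. I would also check that each $r_j$ is guessed consistently with the single vector $R$ across calls, so the decomposition refers to one common $R'$.
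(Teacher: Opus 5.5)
Your proof is correct, and for fairness it matches the paper's: each marking event contributes exactly one center of the marked cluster's color.

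For the cost, your route differs from the paper's. The paper argues only for the sum objective. It notes that a hard-case pair contributes $(r_j+2r_\ell)+(2r_j+r_\ell)=3r_j+3r_\ell$, so the total is $3\sum_i r_i$. It then simply asserts that the factor still holds for monotone symmetric norms. Your decomposition $\rho=R'+2R''$, with $R''$ the permutation of $R'$ swapping the coordinates of each hard-case pair, supplies the argument that assertion lacks. The vector $\rho$ is not coordinatewise dominated by $3R'$ (e.g.\ $r_j+2r_\ell>3r_j$ whenever $r_\ell>r_j$), so monotonicity alone cannot give the bound. The triangle inequality together with symmetry does give $\|\rho\|\le 3\|R'\|$ for every norm at once, with the sum objective as a special case. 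Equivalently, each hard-case pair equals $\tfrac13(3r_j,3r_\ell)+\tfrac23(3r_\ell,3r_j)$, so $\rho$ lies in the convex hull of the permutations of $3R'$. This also shows why the algorithm splits the radii as $r_j+2r_\ell$ and $2r_j+r_\ell$ rather than using one ball of radius $3r_j+2r_\ell$ plus a zero-radius ball.

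Two minor points. Your remark on where monotonicity enters is unnecessary, since $\|R\|\le(1+\eps)\opt$ is already part of the definition of an $\eps$-approximation. Your explicit rescaling from $3(1+\eps)$ to $3+\eps$ is a step the paper also leaves implicit.
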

\begin{proof}
	We first argue about the cost of $\cE$ for sum objective. For $\cE_e$, we have that the sum of radii of the solutions in $\cE_e$ is $\sum_{(t_j,3r_j) \in \cE_e} 3r_j = 3\cdot \sum_{(t_j,3r_j) \in \cE_e} r_j$. Now consider a pair $(t_j,r_j+2r_\ell), (t_\ell,2r_j+r_\ell)$ in $\cE_h$, we have that the sum of radii of this pair is $3r_j+3r_\ell$. Since the clusters marked in the easy case are disjoint from those marked in the hard case, we have that the sum of radii of the solution $\cE$ is $3\sum_{i \in [k]} r_i \le (3+\eps)\opt$, as $\{r_i\}$ is an $\eps$-approximation of the optimal radii $\{r^*_i\}$.
	Next, we argue that $\cE$ satisfies the fairness constraints of $\cI$. First note that a solution $(t_j,r_j) \in \cE_e$ corresponds to the marking of $\pi^*_j$, and hence in this case, we have that $t_j \in F_j$. Consider a pair $(t_j,r_j+2r_\ell), (t_\ell,2r_j+r_\ell)$ in $\cE_h$ corresponding to the marking of $\pi^*_j$ and $\pi^*_\ell$ (see~\Cref{fig:twolightballs} for an illustration). In this case, we have that $t_j \in F_j$ and $t_\ell \in F_\ell$ due to~\Cref{algo:settle cluster}. Therefore, this pair satisfies the fairness constraints of $F_j$ and $F_\ell$, finishing the proof.
	
	For the case when the objective is a monotone symmetric norm of the radii of the clusters, it is easy to see that the factor $(3+\eps)$ still holds.
\end{proof}

Due to the above lemma, it is now sufficient to show that  $\cE$ covers at least $|C|-z$ points of $C$. For the rest of the analysis, relabel the optimal clusters $\pi^*_{1},\dots, \pi^*_{k}$ in the order of their marking by the algorithm. 
Let $k' \in [k]$ be the total number of iterations performed by  the \textbf{while} loop of~\Cref{algo:colksupalgo}.
For every iteration $i \in [k']$  of the \textbf{while} loop of~\Cref{algo:colksupalgo}, let $\cE_i$ be the solution constructed by~\Cref{algo:colksupalgo} at the end of $i^{th}$ iteration. Moreover, let $\Pi^*_{i_m}=\{\pi^*_1,\dots, \pi^*_{i_m}\} \subseteq \Pi^*$ be the set of marked optimal clusters at the end of $i^{th}$ iteration. Note that $\cE_{k'}= \cE$ and $\Pi^*_{i_{k'}} = \Pi^*$.
The following lemma implies that $\cE$ covers at least as many points as covered by the optimal clusters.
\begin{lemma}\label{lem:covE}
	Consider an iteration $i \in [k']$ of  the \textbf{while} loop of~\Cref{algo:colksupalgo}. Then,   the number of points covered by the balls in solution $\cE_i$ is at least the number of points in $\Pi^*_{i_m}$, i.e.,  $|\cup_{(t,r) \in \cE_i} \ball(t,r)| \ge |\pi^*_1\cup\dots\cup\pi^*_{i_m}|$. 
\end{lemma}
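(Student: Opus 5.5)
Our plan is induction on the iteration index $i$, with a stronger, charging-based invariant. We maintain an injective charging map $\phi_i$ from $\pi^*_1\cup\dots\cup\pi^*_{i_m}$ into the points covered by $\cE_i$, i.e.\ into $C\setminus C'_{i+1}$. A point already covered charges itself. Any other point is charged to a distinct \emph{free} point, that is, a point of $C'_{f,i}$ (an outlier of $\Pi^*$ or a point of an already marked cluster) that the algorithm removed in iteration $i$. Injectivity gives the lemma. To keep injectivity across iterations, we also maintain that every point used as a charging target is removed from $C'$ at that same iteration, so it is never reused. Self-charges cannot collide with free-point charges: a point of $\pi^*_\ell$ is free only after $\pi^*_\ell$ has been marked, and by then it has already been charged. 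The base case $i=0$ is empty.

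For the inductive step, let $j=j_i$ and split on the case in which \Cref{algo:settle cluster} returned. All points of $\pi^*_j$ outside $C'_i$ are already covered and charge themselves, so only $\pi^*_j\cap C'_i$ needs work.
\begin{itemize}
\item \emph{Nearby ball} centered at $\hat t$. For $p\in\pi^*_j$ we have $d(\hat t,p)\le 2r_j+r^*_j\le 3r_j$. Hence all of $\pi^*_j\cap C'_i$ is removed by $\ball(\hat t,3r_j)$, and these points charge themselves.
\item \emph{Good ball} centered at $\hat t$. The ball $\hat B\subseteq \ball(\hat t,r_j)\cap C'_i$ contains at least $|\pi^*_j\cap C'_i|$ free points. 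Map $\pi^*_j\cap C'_i$ injectively onto them; they lie in $\ball(\hat t,3r_j)$ and are removed now.
\end{itemize}
One subtlety is that the balls in $\cB$ are computed on $C''\subseteq C'_i$, so $\hat B$ consists of points of $C'_i$; this is what makes the definition of ``good in $C'_i$'' consistent.

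The \emph{hard case} is the main obstacle. Here $\hat B_1,\hat B_2$ are light and nearby to $\pi^*_\ell$. By \Cref{lem:charging lemma}(iii), $\hat B_1\cup\hat B_2$ has at least $|\pi^*_j\cap C'_i|$ free points. We must show these free points, and all of $\pi^*_\ell\cap C'_i$, are removed by the two returned balls.

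For $\hat t_1$: its ball has radius $r_j+2r_\ell$. It contains $\hat B_1$ trivially. It also contains $\pi^*_\ell$, since $d(\hat t_1,o^*_\ell)\le r_j+r^*_\ell$ (from being nearby for $\pi^*_\ell$ in the $r_j+r_\ell$ sense used in the proof of \Cref{lem:iterofalgo}), so any $p\in\pi^*_\ell$ is within $r_j+2r_\ell$.

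For $\hat t'_2$: we have $d(\hat t'_2,\hat t_2)\le r_j+r_\ell$, so its ball of radius $2r_j+r_\ell$ contains $\hat B_2$.

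Therefore $\pi^*_\ell\cap C'_i$ charges itself, and $\pi^*_j\cap C'_i$ charges injectively into the free points of $\hat B_1\cup\hat B_2$. These free points are distinct from $\pi^*_\ell$, because $\pi^*_\ell$ is unmarked and hence not free, and they are removed now. The delicate point is to check that the nearby-for-$\pi^*_\ell$ guarantee in the trichotomy indeed gives $d(\hat t_1,o^*_\ell)\le r_j+r^*_\ell$. This follows because $\hat B_1$ (radius $r_j$) intersects $\pi^*_\ell$ (radius $r^*_\ell$), which is how the pigeonhole step in \Cref{lem:charging lemma} produces it. We would state and use that distance bound explicitly rather than the $2r$ definition.

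Summing over iterations, the map is injective. So $|\bigcup_{(t,r)\in\cE_i}\ball(t,r)|\ge|\pi^*_1\cup\dots\cup\pi^*_{i_m}|$.
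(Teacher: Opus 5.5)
Your proposal is correct and follows the same route as the paper. It proceeds by induction over iterations with the self-charging and free-point-charging rules. In the easy case it uses the nearby or good ball; in the hard case $\pi^*_\ell$ charges itself and $\pi^*_j\cap C'_i$ is charged to the free points of $\hat B_1\cup\hat B_2$. Your explicit use of $d(\hat t_1,o^*_\ell)\le r_j+r^*_\ell$ (coming from $\hat B_1\cap\pi^*_\ell\neq\emptyset$) and of $\ball(\hat t_2,r_j)\subseteq\ball(\hat t'_2,2r_j+r_\ell)$ is exactly what the paper's argument implicitly relies on, and you state it more carefully than the paper's literal ``nearby'' definition and its stated containment.
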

\begin{proof}
	We prove the lemma using the following charging argument.
	We show, by induction on $i \in [k']$, that each point of $\Pi^*_{i_m}$ can be charged to a distinct point in the balls of $\cE_i$. 
	\paragraph*{Charging scheme.} We show that each point in $\pi^*_{i_m}$ can be charged to a distinct point in the balls of $\cE_i$ using exactly one of the two charging rules. In the first charging rule, which we call \emph{self-charging} rule, a point in $\pi^*_{i_m}$ is charged to itself, if it is covered by the balls of $\cE_i$
	Note that each point in $\pi^*_{i_m}$  can be charged at most once by this charging rule, since it is contained exactly in one optimal cluster.
	In the second charging rule, called \emph{freepoints-charging} rule, a point in $\pi^*_{i_m}$ is charged to a distinct free point (of iteration $i$) in  $\cE_i$, provided such a point exists in $\cE_i$. Recall that a free point of iteration $i$ is a point $p \in C'_i$ that does not belong to any of the remaining unmarked optimal clusters, i.e., $p \in C'_i$ and $p \notin \pi^*_{i_{m}+1}\cup\dots\cup\pi^*_{i_k}$.
	Note that future optimal clusters can not charge these points again using self-charging rule since they do not belong to any of the remaining unmarked optimal cluster. Additionally, no future optimal cluster can charge them using freepoints-charging rule since they are deleted at the end of the iteration.
	
	Now  we are ready for the inductive argument using the above charging scheme.
	For the base case, consider the first iteration, $i=1$.
	If it is in the easy case, then we have $\Pi^*_{1_m}=\{\pi^*_1\}$, and hence    let $(\hat{t},3r_1)$ be the solution present in $\cE_1$. Then, $\hat{B}=\ball(\hat{t},r_1) \in \cB$ is either a nearby ball to $\pi^*_{1}$ or a good ball for $\pi^*_{1}$ in $C'_1$. For the former case, note that $\pi^*_{1} \subseteq \ball(\hat{t},3r_1)$ due to triangle inequality, and hence  use self-charging rule to charge the points of $\pi^*_{1}$ to themselves. 
	For the latter case, note that $C'_1=C$, and hence $C'_{f,1}=Z^*$, the set of outliers point for $\Pi^*$. Since $\hat{B}$ is a good ball for $\pi^*_{1}$ in $C'_1=C$, it contains at least $|\pi^*_{1} \cap C|=|\pi^*_{1}|$ points from $Z^*$. Hence, charge each point of $\pi^*_{1}$ to a distinct outlier point in $\hat{B}$ using the freepoint charging rule.
	Therefore, $|\ball(\hat{t},3r_1)| \ge |\pi^*_1|$ in the easy case.
	Now suppose the iteration is in the hard case, then note that $\Pi^*_{1_m}= \{\pi^*_1,\pi^*_2\}$ as $\pi^*_1$ and $\pi^*_2$ are settled together by~\Cref{algo:settle cluster}.
	Let $\ball(\hat{t}_1,r_1)$ and $\ball(\hat{t}_2,r_2)$ be the balls obtained by the algorithm from~\Cref{lem:charging lemma}.
	In this case, $\cE_1$ contains $(\hat{t}_1,r_1+2r_2)$ and $(\hat{t}'_2,2r_1+r_2)$. 
	Since $\ball(\hat{t}_1,r_1)$ is a nearby ball to $\pi^*_2$, we have that  $\pi^*_2 \subseteq \ball(\hat{t}_1,r_1+2r_2)$, and hence, charge each point of $\pi^*_2$ to itself using self-charging rule. Furthermore, since both $\ball(\hat{t}_1,r_1)$  and $\ball(\hat{t}_2,r_2)$ are light balls for iteration $1$, we have that they together contain at least $|\pi^*_1|$ freepoints, since each of them has at least half of its points as freepoints. Therefore, charge each point of $\pi^*_1$ to a distinct free point in $\ball(\hat{t}_1,r_1) \cup \ball(\hat{t}_2,r_2) \subseteq \ball(\hat{t}_1,r_1+2r_2) \cup \ball(\hat{t}'_2,r_1+2r_2)$   using freepoint-charging rule, since $d(\hat{t}'_2,\hat{t}_2) \le r_1+r_2$ (see~\Cref{fig:twolightballs}).
	Hence, the hypothesis holds for the first iteration.
	
	
	

	Now for the induction hypothesis, suppose  each point in $\Pi^*_{(i-1)_m}$ has been charged to a distinct point in the balls of $\cE_{i-1}$. Now, suppose iteration $i$ is in the easy case, and hence, we have $\Pi^*_{i_m} = \Pi^*_{(i-1)_m} \cup \{\pi^*_{i_m}\}$.
	Therefore, let $(\hat{t},3r_{i_m})$ be the solution added to $\cE_{i-1}$ to obtain $\cE_i$.
	Moreover, $\hat{B}=\ball(\hat{t},r_{i_m}) \in \cB$ is either a nearby ball to $\pi^*_{i_m}$ or a good ball for $\pi^*_{i_m}$ in $C'_i$. For the former case, note that $\pi^*_{i_m} \subseteq \ball(\hat{t},3r_{i_m})$ due to triangle inequality, and hence use the self-charging rule to charge the points of $\pi^*_{i_m}$ to themselves.
	For the latter case, since $\hat{B}$ is a good ball for $\pi^*_{i_m}$ in $C'_i$, it contains at least $|\pi^*_{i_m} \cap C'_i|$ free points. Therefore, use self-charging rule to charge each point of $\pi^*_{i_m}$ contained in  the balls of $\cE_{i-1}$ to itself, and
	use freepoint-charging rule to charge each point of $\pi^*_{i_m} \cap C'_i$ to a distinct free point in $\hat{B}$. 
	Now suppose iteration $i$ is in the hard case, and hence we have $\Pi^*_{i_m} = \Pi^*_{(i-1)_m} \cup \{ \pi^*_{i_{m-1}},  \pi^*_{i_m}$\}. 
	In this case, the algorithm adds $(\hat{t}_1,r_{i_{m-1}}+2r_{i_m})$ and $(\hat{t}'_2,2r_{i_{m-1}}+r_{i_m})$. 
	See~\Cref{fig:twolightballs} for an illustration.
	Since $\ball(\hat{t}_1,r_{i_{m-1}})$ is a nearby ball to $\pi^*_{i_m}$, we have that  $\pi^*_{i_m} \subseteq \ball(\hat{t}_1,r_{i_{m-1}}+2r_{i_m})$, and hence, charge each point of $\pi^*_{i_m}$ to itself using self-charging rule. Furthermore, since both $\ball(\hat{t}_1,r_{i_{m-1}})$  and $\ball(\hat{t}_2,r_{i_m})$ are light balls for iteration $i$, we have that they together contain at least $|\pi^*_{i_{m-1}} \cap C'_i|$ freepoints from~\Cref{lem:charging lemma}. 
	Therefore, use self-charging rule to charge each point of $\pi^*_{i_{m-1}}$ contained in  the balls of $\cE_{i-1}$ to itself, and
	use freepoint-charging rule to charge each point of $\pi^*_{i_{m-1}} \cap C'_i$ to a distinct free point in $\ball(\hat{t}_1,r_{i_{m-1}}) \cup \ball(\hat{t}_2,r_{i_{m}}) \subseteq \ball(\hat{t}_1,r_{i_{m-1}}+2r_{i_{m}}) \cup \ball(\hat{t}'_2,r_{i_{m-1}}+2r_{i_m})$   using freepoint-charging rule, since $d(\hat{t}'_2,\hat{t}_2) \le r_{i_{m-1}}+r_{i_m}$.
\end{proof}

Now we prove the structural trichotomy.

\begin{proof}[Proof of~\Cref{lem:charging lemma}]
	First note that $\cB$ is non-empty due to~\Cref{obs:Bnotempty}. Now consider the case when $|\cB| < 4k$. Then, we claim that $\cB$ contains a nearby ball for $\pi^*_{j_i}$, implying (ii).
	Consider the \textbf{while}-loop of~\Cref{algo:settle cluster}, and since $|\cB| < 4k$, $F''_{j_i}$ is  empty  at the end of this loop. Since the \textbf{while}-loop only deletes the facilities in $F''_{j_i}$ after picking them as a center of a ball in $\cB$, this means that there is ball $\hat{B} \in \cB$ centered at $o^*_{j_i}$. This implies (ii) as $\hat{B}$ is a nearby ball for $\pi^*_{j_i}$.
	
	Now consider the interesting case where $\cB$ has exactly $4k$ balls but none of them is a nearby ball for $\pi^*_{j_i}$ . In this case, we will show that either $(i)$ or $(iii)$ is true.
	We start with the following simple observation, whose proof follows from the fact that the balls in $\cB$ are pairwise disjoint in $C'_i$ with radius $r_{j_i} \ge r^*_{j_i}$ and none of them contain a point from $\pi^*_{j_i}$.
	\begin{observation}\label{obs:pointsofB}
		If $\cB$ does not contain a nearby ball for $\pi^*_{j_i}$, then each ball in $\cB$ contain at least $t_{j_i}:=|\pi^*_{j_i} \cap C'_i|$ many distinct points from $C'_i$.
	\end{observation}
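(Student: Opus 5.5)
The plan is to track the greedy construction in \Cref{algo:settle cluster}, and to show that the first ball already has at least $t_{j_i}$ points; every later ball inherits the same bound by comparing it with $\ball_{C''}(o^*_{j_i},r_{j_i})$. Write $B_1,\dots,B_{4k}$ for the balls in the order they are picked. Let $C''$ be the set of remaining clients when $B_m$ is picked.

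The first step is to show that $o^*_{j_i}$ is still an available candidate center at every step. This holds because no ball in $\cB$ is a nearby ball for $\pi^*_{j_i}$. Indeed, if some $B_{m'}$ were centered at $o^*_{j_i}$, then its center $\hat{t}$ would lie in $F_{j_i}$ with $d(\hat{t},o^*_{j_i})=0\le 2r_{j_i}$, which makes it a nearby ball. Hence $o^*_{j_i}$ is never removed from $F''_{j_i}$.

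The second step is to show that no point of $\pi^*_{j_i}\cap C'_i$ is ever removed from $C''$, i.e.\ every $B_m$ is disjoint from $\pi^*_{j_i}$. Suppose $p\in B_m\cap \pi^*_{j_i}$, where $B_m$ has center $\hat{t}_m\in F_{j_i}$. Then
\[
d(\hat{t}_m,o^*_{j_i})\le d(\hat{t}_m,p)+d(p,o^*_{j_i})\le r_{j_i}+r^*_{j_i}\le 2r_{j_i},
\]
using $r_{j_i}\ge r^*_{j_i}$. So $B_m$ would be a nearby ball, which contradicts the hypothesis.

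The third step combines the two. At the time $B_m$ is chosen, we have $\pi^*_{j_i}\cap C'_i\subseteq C''$ and $\pi^*_{j_i}\subseteq\ball(o^*_{j_i},r^*_{j_i})\subseteq \ball(o^*_{j_i},r_{j_i})$. Therefore
\[
|\ball_{C''}(o^*_{j_i},r_{j_i})|\ge t_{j_i}.
\]
Since $B_m$ is a densest ball over centers in $F''_{j_i}$, and $o^*_{j_i}$ is among those centers, it follows that $|B_m|\ge t_{j_i}$. Finally, the balls are pairwise disjoint subsets of $C'_i$, because each chosen ball is deleted from $C''$. So the points they contain are distinct points of $C'_i$.

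The only delicate step is the second one, namely that each ball misses $\pi^*_{j_i}$. It uses the slack $2r_{j_i}$ in the definition of a nearby ball together with $r_{j_i}\ge r^*_{j_i}$. Everything else is direct bookkeeping.
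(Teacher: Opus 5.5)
Your proposal is correct and matches the paper's argument, which justifies the observation in a single line: the balls are pairwise disjoint in $C'_i$, have radius $r_{j_i}\ge r^*_{j_i}$, and none contains a point of $\pi^*_{j_i}$. You supply the omitted details, namely that $o^*_{j_i}$ is never removed from $F''_{j_i}$, the check $d(\hat t_m,o^*_{j_i})\le r_{j_i}+r^*_{j_i}\le 2r_{j_i}$ showing no ball meets $\pi^*_{j_i}$, and the densest-ball comparison with $\ball_{C''}(o^*_{j_i},r_{j_i})$; like the paper's own trichotomy proof, your first step reads the pseudocode update $F''\gets F''\setminus \hat t$ as removing $\hat t$ from $F''_j$.
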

	
	Let $\cL\subseteq \cB$ be the set of light balls for iteration $i$, and  let $\cD = \cB \setminus \cL$, be the set of dense balls for iteration $i$. 
	We claim that $|\cD| < 2k$, implying $|\cL| \ge 2k$. For contradiction suppose $|\cD| \ge 2k$. We  will show that  $\cB$ contains a nearby ball for $\pi^*_{j_i}$, contradicting the assumption for this case. Let $u_i$ be the number of points of unmarked optimal clusters in $C'_i$.  Since $\pi^*_{j_i}$ is the densest unmarked optimal cluster in $C'_i$, we have that $u_i \le t_{j_i}\cdot k$, where $t_{j_i}=|\pi^*_{j_i} \cap C'_i|$. On the other hand, since none of the balls in $\cD$ contain a point from $\pi^*_{j_i}$, we have that each ball in $\cD$ is disjoint from $\pi^*_{j_i} \cap C'_i$.  Therefore, from~\Cref{obs:pointsofB}, every ball in $\cD$ contains at least $t_{j_i}$ many distinct points from $C'_i$. Moreover, since each ball in $\cD$ is a dense ball for iteration $i$,  each ball in $\cD$ contains more than $\frac{t_{j_i}}{2}$ many points from the remaining unmarked optimal clusters. Therefore, the total number of points in $\cD$ that belong to the remaining unmarked optimal clusters is $>\frac{t_{j_i}}{2} \cdot |\cD| = t_{j_i}\cdot k$, contradicting the upper bound on $u_i$.
	
	Now consider $\cL, |\cL|\ge 2k$. If $\cL$ contains a good ball for $\pi^*_{j_i}$ then we are done since (i) is true in this case. Hence, assume that no ball in $\cL$ is a good ball for $\pi^*_{j_i}$. This means that every ball in $\cL$ has a point that belongs to one of the remaining unmarked optimal clusters (except $\pi^*_{j_i}$). Since $|\cL| \ge 2k \ge k+1$, there exists a unmarked optimal cluster $\pi^*_{\ell_i}$ and balls $\hat{B}_1 \neq \hat{B}_2 \in \cL$ such that $\hat{B}_1$ and $\hat{B}_2$ each contain a distinct point from $\pi^*_{\ell_i} \cap C'_i$. This means that $\hat{B}_1$ and $\hat{B}_2$ are nearby balls for $\pi^*_{\ell_i}$. Moreover, since each of $\hat{B}_1,\hat{B}_2$ is a light ball for iteration $i$, we have that each one of them has half the points that are free points, i.e, $|\hat{B}_g \cap C'_{f,i}| \geq \frac{|\hat{B}_g|}{2}$, for $g \in [2]$. As these balls are pairwise disjoint in $C'_i$, we have that
	\[
	|(\hat{B}_1 \cup \hat{B}_2) \cap C'_{f,i}| = |\hat{B}_1 \cap C'_{f,i}| + |\hat{B}_2 \cap C'_{f,i}| \geq \frac{|\hat{B}_1| + |\hat{B}_2|}{2} \geq t_{j_i} \geq |\pi^*_{\ell_i} \cap C'_i|,
	\]
	where the second last inequality follows due $|\hat{B}_1| + |\hat{B}_2| \geq 2t_{j_i}$ from~\Cref{obs:pointsofB}, and the last inequality follows from the fact that $\pi^*_{j_i}$ is the densest unmarked optimal cluster in $C'_i$. This implies (iii).
\end{proof}

\subsection{Running time}\label{ss:runtime}
We next bound the running time of our algorithms without using the \guess keyword. Specifically, we show how to replace each \guess keyword using a bounded branching program. First, assume that~\Cref{algo:colksupalgo} is given an $\eps$-approximation of $R^*$ as an input  and hence both~\Cref{algo:colksupalgo} and~\Cref{algo:settle cluster} know these radii. 
With this assumption, we next explain how to replace other \emph{guess}es. 

Each iteration of~\Cref{algo:colksupalgo} has to guess the index of the densest unmarked optimal cluster, which results into $k$ branches.
Next, at every call of~\Cref{algo:settle cluster}, the structural trichotomy implies three possible cases.
For the nearby-ball and good-ball cases, we have to guess which of the constructed balls satisfies the corresponding property, which can be done by enumerating $O(k)$ choices for each case.
For the two-light-balls case, we have to guess the relevant pair of balls and the corresponding unmarked cluster index, which requires enumerating $O(k^3)$ choices.
Thus, each call of~\Cref{algo:settle cluster} branches into at most $O(k)+ O(k^3)=O(k^3)$ choices.

First, note that with $\eps$-approximation of $R^*$ and the structural trichotomy, \Cref{algo:settle cluster} never fails.
As \Cref{algo:colksupalgo} runs at most $k$ iterations
since, in each iteration, \Cref{algo:settle cluster},  settles at least one optimal cluster, the total number of nodes in this branching program is bounded by $(k\cdot O(k^3))^k = k^{O(k)} = 2^{O(k\log k)}$.
Moreover, each node in the branching program performs only polynomial work in $n$.
Finally, we remove the assumption that the algorithms know $\eps$-approximation of $R^*$ as follows. From~\Cref{thm:epsapxrad}, we have a list $\cL$ of candidate radii that contains an $\eps$-approximation of $R^*$. For each candidate entry in $\cL$, we run the branching version of~\Cref{algo:colksupalgo} as mentioned above, and return the candidate radii with minimum norm for which~\Cref{algo:colksupalgo} does not fail. Therefore, the overall running time of~\Cref{algo:colksupalgo} is $2^{O(k \log (k/\eps))}\poly(n)$.






\section{Extension to Lower and Upper Bound Requirements}\label{sec:extn}

We extend our results to the \emph{fair-range SoR with outliers} problem, where the point set $X$ is partitioned into demographic groups $G_1,\dots,G_t$, and each group $G_i$ is associated with a lower bound $\ell_i$ and an upper bound $u_i$.
The goal is  to find $k$-sized set $S=\{s_1,\dots,s_k\} \subseteq \points$ and a radii set $R=\{r_1,\dots,r_k\} \in \mathbb{R}_{\ge 0}$
such that  $|(\cup_{i \in [k]} \ball(s_i,r_i))| \ge n-z$ and \[
\ell_i \le |S \cap G_i| \le u_i \qquad \text{for all } i \in [t],
\]
so as to minimize $\sum_{i \in [k]} r_i$. 
We assume feasibility, i.e., $\sum_{i=1}^t \ell_i \le k \le \sum_{i=1}^t u_i$.

\medskip

The high level idea is to reduce the fair-range constraints into a collection of unit-requirement groups, with both equality and upper-bound constraints.
In more detail, for each group $G_i$, we first enforce the lower bound by creating $\ell_i$ \emph{equality groups}, each of which is a copy of $G_i$ and must contribute exactly one center.
This guarantees that any feasible solution selects at least $\ell_i$ centers from $G_i$.
After these mandatory selections, the group $G_i$ has a remaining allowance of $u_i-\ell_i$ additional centers.
To capture this flexibility, we create $u_i-\ell_i$ additional copies of $G_i$, referred to as \emph{slack groups}, each with unit requirement allowing the selection of at most one center.
Selecting centers from these slack groups permits additional representatives from $G_i$ but caps the total number of selections at $u_i$.
Thus, the equality groups enforce the lower bounds, while the slack groups encode the upper bounds, and every group in the resulting instance has a unit requirement.

We now apply the same color-coding technique as in Section~\ref{ss:reduc} to reduce the instance to all unit-requirement groups into exactly $k$ color classes, obtaining a colorful $k$-SoR instance with outliers.
Since a feasible colorful solution selects exactly one center from each color class, it corresponds to selecting exactly one center from each equality and slack group, thereby satisfying all lower and upper bound constraints.
Next, we use~\Cref{algo:colksupalgo} to obtain a $(3+\eps)$-approximate solution for this colorful instance in \fpt\ time, which in turn, is a $(3+\eps)$-approximate solution for the fair-range instance.
The correctness and approximation guarantee follow identically from the analysis of the upper-bound case.
We therefore obtain the following extension of our main result.
\fairangethm*

\subsubsection*{Acknowledgment}
We thank Tanmay Inamdar for his valuable suggestions on an earlier version of this paper, which considered only the $k$-center objective.

\bibliography{refs}
\bibliographystyle{plain}

\end{document}